\newenvironment{itemizewithlabel}
 {\begin{itemize}\def\makelabel##1{\hbox to 6pt{\hss\llap{##1}}}}
 {\end{itemize}}
\def\Tlogstar{\log^*\!}
\newbox\Tbox
\def\cev#1{\setbox\Tbox\hbox{$#1$}
  #1\smash{\llap{\raise 1.1pt\hbox{\raise\ht\Tbox\hbox{$\rotatedown{
    \mskip -2.8mu\vec{\smash{\phantom{#1}}}}$}}}}}
\def\Tsim{\equiv}
\title{Space-Efficient DFS and Applications:\\
 Simpler, Leaner, Faster}
\author{Torben Hagerup}
\institute{\Tinfuna[5]\\
  \email{hagerup@informatik.uni-augsburg.de}}
\begin{document}
  \overfullrule=5pt

\maketitle{}

\begin{abstract}
The problem of space-efficient
depth-first search (DFS) is reconsidered.
A particularly simple and
fast algorithm is presented that, on a
directed or undirected input graph $G=(V,E)$ with
$n$ vertices and $m$ edges, carries out a
DFS in $O(n+m)$ time with
$n+\sum_{v\in V_{\ge 3}}\Tceil{\log_2(d_v-1)}
 +O(\log n)\le n+m+O(\log n)$ bits of working memory,
where $d_v$ is the (total) degree of $v$, for
each $v\in V$, and
$V_{\ge 3}=\{v\in V\mid d_v\ge 3\}$.
A slightly more complicated variant of the
algorithm works in the same time with
at most $n+({4/5})m+O(\log n)$ bits.
It is also shown that a DFS can be carried out
in a graph with $n$ vertices and $m$ edges
in $O(n+m\Tlogstar n)$ time with $O(n)$ bits
or in $O(n+m)$ time with
either $O(n\log\log(4+{m/n}))$ bits or,
for arbitrary integer $k\ge 1$,
$O(n\log^{(k)}\! n)$ bits.
These results among them subsume or improve most
earlier results on space-efficient DFS.
Some of the new time
and space bounds are shown to
extend to applications of DFS such as
the computation of cut vertices,
bridges, biconnected components and
2-edge-connected components in undirected graphs.

\bigskip

{\bf Keywords:}
Graph algorithms, space efficiency, depth-first search,
DFS.
\end{abstract}

\pagestyle{plain}
\thispagestyle{plain}

\section{Introduction and Related Work}
\label{sec:intro}

Depth-first search or DFS is a very well-known method
for visiting the vertices and edges of a
directed or undirected graph~\cite{CorLRS09,Tar72}.
DFS is set off from other ways of traversing the
graph such as breadth-first search by the
DFS rule:
Whenever two or more vertices were discovered
by the search and have unexplored incident
(out)edges, an (out)edge
incident on the most recently discovered such
vertex is explored first.
The DFS rule confers a number of structural
properties on the resulting graph traversal
that cause DFS to have a large number of
applications.
The rule can be implemented with the aid of a stack
that contains those vertices discovered by the
search that still have unexplored incident (out)edges,
with more recently discovered vertices
being located closer to the top of the stack.
The stack is the main obstacle to a
space-efficient implementation of DFS.

In the following discussion, let $n$ and $m$
denote the number of vertices and of edges,
respectively, of an input graph.
Let us also use the common picture according to
which every vertex is initially \emph{white},
becomes \emph{gray} when it is discovered and
pushed on the stack, and turns \emph{black}
when all its incident (out)edges have been explored
and it leaves the stack.
The study of space-efficient DFS was initiated by 
Asano et al.~\cite{AsaIKKOOSTU14}.
Besides a number of DFS algorithms whose running times
were characterized only as polynomial in $n$ or
worse, they described an algorithm that uses
$O(m\log n)$ time and $O(n)$ bits and another
algorithm that uses $O(n m)$ time and at most
$(\log 3+\epsilon)n$ bits, for arbitrary fixed $\epsilon>0$,
where ``$\log$'', here and in the remainder of
the paper, denotes the binary logarithm
function $\log_2$.
Their basic idea was, since the stack of gray
vertices cannot be kept in full
(it might occupy $\Theta(n\log n)$ bits),
to drop (forget) stack entries and to
restore them in smaller or bigger chunks
when they are later needed.
Using the same idea,
Elmasry, Hagerup and Kammer~\cite{ElmHK15} observed that one
can obtain the best of both algorithms, namely
a running time of $O((n+m)\log n)$ with
$(\log 3+\epsilon)n$ bits.
Assuming a slightly stronger representation of the
input graph as a set of adjacency arrays rather
than adjacency lists, they also devised an
algorithm that runs in $O(n+m)$ time with
$O(n\log\log n)$ bits or in $O((n+m)\log\log n)$
time with $O(n)$ bits, or anything in between
with the same time-space product.
The new idea necessary to obtain this result was,
rather than to forget stack entries entirely,
to keep for each gray vertex a little information
about its entry on the stack and a little
information about
the position of that stack entry.

The space bounds cited so far may be characterized
as \emph{density-independent} in that they depend
only on $n$ and not on~$m$.
If one is willing to settle for
\emph{density-dependent} space bounds that depend
on $m$ or perhaps on the multiset of vertex degrees,
it becomes feasible to store with each gray vertex $u$
an indication of the vertex immediately
above it on the stack,
which is necessarily a neighbor of $u$ and
therefore expressible in $O(\log(d+1))$ bits,
where $d$ is the degree of~$u$.
Since $\log(d+1)=O(d+1)$, this yields a DFS
algorithm that works in $O(n+m)$ time with
$O(n+m)$ bits, as observed
in~\cite{BanCR16,KamKL16}.
One can also use Jensen's inequality to bound the
space requirements of the pointers to
neighboring vertices by $O(n\log(2+{m/n}))$ bits.
This was done in \cite{ElmHK15} for problems for
which the authors were unable to obtain
density-independent bounds.
In the context of DFS, it was mentioned by
Chakraborty, Raman and Satti~\cite{ChaRS16}.

Several applications of DFS relevant to the present
paper can be characterized by means of equivalence
relations on vertices or edges.
Let $G=(V,E)$ be a graph.
If $G$ is directed and $u,v\in V$, let us write
$u\Tsim_G^{\mathrm{S}}v$ if
$G$ contains a path from $u$ to $v$
and one from $v$ to~$u$.
If $G$ is undirected and $e_1,e_2\in E$, write
$e_1\Tsim_G^{\mathrm{B}}e_2$
($e_1\Tsim_G^{\mathrm{E}}e_2$, respectively)
if $e_1=e_2$ or $e_1$ and $e_2$ belong to a common
simple cycle (a not necessarily simple cycle,
respectively) in $G$.
Then $\Tsim_G^{\mathrm{S}}$ is an equivalence
relation on $V$ and $\Tsim_G^{\mathrm{B}}$ and
$\Tsim_G^{\mathrm{E}}$ are equivalence relations on $E$.
Each subgraph induced by an equivalence class of
one of these relations is called a
\emph{strongly connected component}
(\emph{SCC}) in the case
of $\Tsim_G^{\mathrm{S}}$,
a \emph{biconnected component} (\emph{BCC})
or \emph{block}
in the case of $\Tsim_G^{\mathrm{B}}$,
and a \emph{$2$-edge-connected component}
(which we shall abbreviate to \emph{2ECC})
in the case of $\Tsim_G^{\mathrm{E}}$.
Sometimes a single edge with its endpoints is
not considered a biconnected or
2-edge-connected component;
adapting our algorithms to alternative definitions
that differ in this respect is a trivial matter.
Suppose that $G$ is undirected.
A \emph{cut vertex} (also known as an
\emph{articulation point}) in $G$ is a vertex
that belongs to more than one BCC in~$G$;
equivalently, it is a vertex whose removal from $G$
increases the number of connected components.
A \emph{bridge} in $G$ is an edge that belongs
to no cycle in~$G$;
equivalently, it is an edge whose removal from $G$
increases the number of connected components.

For each of the three kinds of components
introduced above, we may want the components
of an input graph to be output one by one.
Correspondingly, we will speak of the SCC,
the BCC and the 2ECC problems.
Outputting a component may mean outputting
its vertices or edges or both.
Correspondingly, we may describe an algorithm
as, e.g., computing the
strongly connected components of a graph
with their vertices.
We may either output special separator symbols
between consecutive components or number the
components consecutively and output
vertices and edges together
with their component numbers;
for our purposes, these two conventions are
equivalent.
\emph{Topologically sorting} a directed acyclic
graph $G=(V,E)$ means outputting the vertices of $G$
in an order such that for each $(u,v)\in E$,
$u$ is output before~$v$.

Elmasry et al.~\cite{ElmHK15} gave algorithms for
the SCC problem and
for topological sorting that work in $O(n+m)$ time using
$O(n\log\log n)$ bits.
Their main tool was a method for
``coarse-grained reversal'' of a DFS computation
that makes it possible to output the vertices of
the input graph in \emph{reverse postorder}, i.e.,
in the reverse of the order in which the vertices
turn black in the course of the DFS.
Various bounds for these problems were claimed
without proof by
Banerjee, Chakraborty and Raman~\cite{BanCR16}:
$O(m\log n\log\log n)$ time with $O(n)$ bits
for the SCC problem and
$O(n+m)$ time with $m+3 n+o(n+m)$ bits as well as
$O(m\log\log n)$ time with $O(n)$ bits
for topological sorting.
For the BCC problem and the computation of
cut vertices, Kammer,
Kratsch and Laudahn~\cite{KamKL16}
described an algorithm that works in $O(n+m)$
time using $O(n+m)$ bits and can be seen as
an implementation of an algorithm of
Schmidt~\cite{Sch13}.
Essentially the same algorithm was sketched by
Banerjee et al.~\cite{BanCR16}, who also
applied it to the 2ECC problem and the
computation of bridges.
Space bounds of the form $O(n\log({m/n}))$ for
the same problems were mentioned by
Chakraborty et al.~\cite{ChaRS16}.
Essentially re-inventing an algorithm
of Gabow~\cite{Gab00} and combining it with
machinery from~\cite{ElmHK15} and with new ideas,
Kammer et al.~\cite{KamKL16}
also demonstrated how to compute the cut vertices
in $O(n+m)$ time with $O(n\log\log n)$ bits.
Finally, decomposing the input graph into
subtrees and processing the subtrees one by one,
Chakraborty et al.~\cite{ChaRS16}
were able to solve the BCC problem and
compute the cut vertices in
$O(m\log n\log\log n)$ time with $O(n)$ bits.

\section{New Results and Techniques}
\label{sec:techniques}

The main thrust of this work is to establish
new density-dependent and density-independent
space bounds for fast DFS algorithms.
Let us begin by developing simple notation
that allows the results to be stated conveniently.

When $G=(V,E)$ is a directed or undirected graph,
$d_v$ is the (total) degree of $v$ for each $v\in V$
and $k$ is an integer, let
\[
L_k(G)=\sum_{{v\in V}\atop{d_v+k\ge 2}}\Tceil{\log_2(d_v+k)}.
\]
When $G$ is directed, we use $L_k^{\mathrm{in}}(G)$ and
$L_k^{\mathrm{out}}(G)$ to denote quantities defined in
the same way, but now with $d_v$ taken to mean the indegree
and the outdegree of~$v$, respectively.

\begin{lemma}
\label{lem:tolog}
Let $G$ be a directed or undirected graph with
$n$ vertices and $m$ edges.
Then
\begin{itemizewithlabel}
\item[(a)]
$L_1(G)\le n\log(1+{{4 m}/n})$;
\item[(b)]
If $G$ is directed, then
$L_1^{\mathrm{in}}(G)$ and $L_1^{\mathrm{out}}(G)$
are both bounded by $n\log(1+{{2 m}/n})$.
\end{itemizewithlabel}
\end{lemma}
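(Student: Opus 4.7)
The plan is to reduce each ceiling term to a concave function of the degree and then apply Jensen's inequality. The key pointwise bound I would establish first is that for every nonnegative integer $y$,
\[
\lceil \log_2(y+1) \rceil \le \log_2(2y+1).
\]
For $y = 0$ both sides vanish; for $y \ge 1$, set $k = \lceil \log_2(y+1) \rceil$, so that $y + 1 > 2^{k-1}$, and since $y$ is an integer this forces $y \ge 2^{k-1}$, giving $2y + 1 \ge 2^k$ and hence $\log_2(2y+1) \ge k$.

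For part (a), I would extend the sum defining $L_1(G)$ to range over all of $V$ at no cost (vertices with $d_v = 0$ contribute $\log_2 1 = 0$) and apply the pointwise bound to obtain $L_1(G) \le \sum_{v \in V} \log(2 d_v + 1)$. Since $\log$ is concave, Jensen's inequality applied to the uniform distribution on $V$ gives
\[
\frac{1}{n}\sum_{v \in V} \log(2 d_v + 1)
\le \log\!\Bigl(\frac{1}{n}\sum_{v \in V}(2 d_v + 1)\Bigr)
= \log\!\Bigl(1 + \frac{2\sum_v d_v}{n}\Bigr).
\]
Substituting $\sum_v d_v = 2m$ and multiplying through by $n$ yields the desired bound $n\log(1 + 4m/n)$. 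Part (b) is identical in structure; the only difference is that the sum of indegrees (resp.\ outdegrees) is $m$ rather than $2m$, so $4m/n$ is replaced by $2m/n$ inside the logarithm.

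The only step that really requires any thought is the pointwise inequality $\lceil \log_2(y+1) \rceil \le \log_2(2y+1)$, where the integrality of $y$ must be exploited to absorb the ceiling into a slightly larger argument of a concave function. Once this is in hand, Jensen's inequality does all the remaining work and no further case analysis (e.g., in terms of the number of degree-$0$ or degree-$1$ vertices) is needed.
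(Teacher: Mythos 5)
Your proof is correct and follows essentially the same route as the paper: the pointwise bound $\Tceil{\log(d+1)}\le\log(2d+1)$ for integers $d\ge 0$ followed by Jensen's inequality applied to the concave function $d\mapsto\log(2d+1)$, with the degree sums $2m$ and $m$ for parts (a) and (b) respectively. The only difference is that you spell out the verification of the pointwise inequality, which the paper merely asserts.
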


\begin{proof}
Let $G=(V,E)$ and, for each $v\in V$,
denote by $d_v$ the (total) degree of~$v$.
To prove part~(a), observe first that
$\Tceil{\log(d+1)}\le\log(2 d+1)$ for all
integers $d\ge 0$.
Since the function $d\mapsto\log(2 d+1)$ is
concave on $[0,\infty)$ and
$\sum_{v\in V}d_v=2 m$,
the result follows from
Jensen's inequality.
Part~(b) is proved in the same way, noting that
the relevant vertex degrees now sum to~$m$.
\end{proof}

Our most accurate space bounds involve terms of
the form $L_k(G)$.
More convenient bounds can be derived from them
with Lemma~\ref{lem:tolog}.
Note that for all $a,b,c>0$, the quantity
$a n+n\log(b+{{c m}/n})$ can also
be written as $n\log(2^a b+{{2^a c m}/n})$.
The latter form will be preferred here.

Our first algorithm carries out a DFS of a
graph $G$ with $n$ vertices and $m$ edges
in $O(n+m)$ time using at most
$n+L_{-1}(G)+O(\log n)$ bits.
The number of bits needed, which can also be bounded
by $n+m+O(\log n)$ and by $n\log(2+{{8 m}/n})+O(\log n)$,
is noteworthy only for the constant
factors involved.
Comparable earlier space bounds were indicated
only as $O(n+m)$ or $O(n\log({m/n}))$ bits, and
no argument offered in their support
points to as small constant factors as ours.
Moreover, all of the earlier algorithms make use of
rank-select structures~\cite{Cla96},
namely to store variable-length
information indexed by vertex numbers.
Whereas asymptotically space-efficient
and fast rank-select structures
are known, it is generally
accepted that in practice
they come at a considerable price in
terms of time and especially space
(see, e.g.,~\cite{Vig08}) and a certain
coding complexity.
In contrast, we view the algorithm
presented here
as the first truly practical
space-efficent DFS algorithm.

The simple but novel idea that enables us to make
do without rank-select structures is a different
organization of the DFS stack.
The vertices on the stack, in the order from
the bottom to the top of the stack, always
form a directed path, in $G$ itself if $G$ is
directed and in the directed version of $G$ if not,
that we call the \emph{gray path}.
Assume that $G$ is undirected.
Instead of having a table that maps each vertex
to how far it has progressed in the exploration
of its incident edges, which in some sense
distributes the stack over the single vertices
and is what necessitates a rank-select structure,
we return to using a stack implemented in
contiguous memory locations and store there for each
internal vertex $v$ on the gray path
the distance in its
adjacency array, considered as a cyclic
structure, from the predecessor $u$ of~$v$
to the successor $w$ of $v$ on the gray path.
More intuitively, one can think of the stack
entry as describing the ``turn'' that the
gray path makes at $v$, namely
from $u$ via $v$ to $w$.
Knowing $w$, $v$ and the ``turn value'', one can
compute $u$.
Provided that outside of the stack we always remember the
\emph{current vertex} $w$ of the DFS,
the vertex on top of the DFS stack
and at the end of the gray path,
and the position in $w$'s adjacency array
of the predecessor of $w$ on the gray path, if any,
this allows us to pop from the stack
in constant time, and pushing is equally easy.
In the course of the processing of~$v$,
the ``turn value'' can be stepped from~1
(``after entering $v$ from $u$, take the next exit'')
to $d-1$, where $d$ is the (total) degree of~$v$
(directed edges that enter~$v$ are simply ignored).
Aside from the somewhat unusual stack, the DFS
can proceed as a usual DFS and
complete in linear time.
Handling vertices of small degree specially, we
can lower the space bound to
$n+({4/5})m+O(\log n)$ bits and solve the
SCC problem in $O(n+m)$ time with
$n\log 3+({{14}/5})m+O((\log n)^2)$ bits.
Resorting to using rank-select structures, we
describe linear-time algorithms for the SCC, BCC
and 2ECC problems and for the computation of
topological sortings, cut vertices and bridges
with space bounds of the form
$(a n+b m)(1+o(1))$ or
$a n\log(b+{{c m}/n})(1+o(1))$ bits,
where $a$, $b$ and $c$ are positive constants.
Apart from minor tricks to reduce
the values of $a$, $b$ and $c$, 
no new techniques are involved here.

Turning to space bounds that are independent
on $m$ or almost so, we first describe a DFS
algorithm that works in $O(n+m)$ time with
$O(n\log\log(4+{m/n}))$ bits.
The algorithm is similar to an algorithm of
Elmasry et al.~\cite{ElmHK15}
that uses $\Theta(n\log\log n)$ bits.
Our superior space bound is made possible by
two new elements:
First, the algorithm is changed to use a stack
of ``turn values'' rather than of
``progress counters'', as discussed above.
And second, when stack entries have to be dropped
to save space, we keep approximations of the
lost entries that turn out to work better than
those employed in~\cite{ElmHK15}.
Our space bound is attractive because it unifies the
earlier bounds of the forms $O(n+m)$,
$O(n\log(2+{m/n}))$ and $O(n\log\log n)$ bits,
being at least as good as all of them
for every graph density and better than each
of them for some densities.

Subsequently we show how to carry out
a DFS in $O(n+m\Tlogstar n)$ time with $O(n)$ bits or,
with a slight variation, in $O(n+m)$ time
with $O(n\log^{(k)}\! n)$ bits for arbitrary
fixed $k\in\TbbbN=\{1,2,\ldots\}$.
Here $\log^{(k)}\!$ denotes $k$-fold repeated
application of $\log$, e.g., $\log^{(2)}\! n=\log\log n$,
and $\Tlogstar n=\min\{k\in\TbbbN\mid\log^{(k)}\! n\le 1\}$.
The main new idea instrumental in obtaining
this result is to let each vertex $v$ dropped from
the stack record, instead of a fixed approximation
of its stack position as in earlier algorithms,
an approximation of that position that changes
dynamically to become coarser when $v$ is farther
removed from the top of the stack.
Adapting an algorithm of Kammer et al.~\cite{KamKL16}
for computing cut vertices, we show that
the time and space bounds indicated in this paragraph
extend to the problems of computing biconnected
and 2-edge-connected components, cut vertices
and bridges of undirected graphs.

\section{Preliminaries}
\label{sec:prelim}

We assume a representation of an undirected
input graph $G=(V,E)$ that
is practically identical to the one
used in~\cite{HagKL17}:
For some known integer $n\ge 1$, $V=\{1,\ldots,n\}$,
the degree of each $u\in V$ can be obtained
as $\Tvn{deg}(u)$, and for each
$u\in V$ and each $i\in\{0,\ldots,\Tvn{deg}(u)-1\}$,
$\Tvn{head}(u,i)$ and $\Tvn{mate}(u,i)$ yield
the $(i+1)$st neighbor $v$ of $u$ and the
integer $j$ with $\Tvn{head}(v,j)=u$,
respectively, for some arbitrary
numbering, starting at 0,
ordering of the neighbors of each vertex.
The access functions $\Tvn{deg}$, $\Tvn{head}$
and $\Tvn{mate}$ run in constant time.
The representation of a directed graph
$G$ is similar in spirit:
$G$ is represented with in/out adjacency arrays,
i.e., we can access the inneighbors as well
as the outneighbors of a given vertex one by one,
and there are \emph{cross links}, i.e., the
function $\Tvn{mate}$ now, for each edge $(u,v)$,
maps the position of $v$ in the adjacency array
of $u$ to that of $u$ in the adjacency array
of $v$ and vice versa.

A DFS of a graph $G$ is associated with a
spanning forest $F$ of $G$ in an obvious way:
If a vertex $v$ is discovered by the DFS when
the current vertex is $u$,
$v$ becomes a child of~$u$.
$F$ is called the \emph{DFS forest} corresponding
to the DFS, and its edges are called
\emph{tree edges}, whereas the other edges of $G$
may be called \emph{nontree edges}.
At the outermost level, the DFS steps through
the vertices of $G$ in a particular order, called
its \emph{root order}, and every vertex found
not to have been discovered at that time
becomes the root of the next \emph{DFS tree} in~$F$.
The \emph{parent pointer} of a given vertex $v$
in $G$ is an indication of the parent $u$ of
$v$ in $F$, if any.
If the root order of a DFS is simply $1,\ldots,n$
and the DFS always explores the edges incident on
the current vertex $u$ in the order in which
their endpoints occur in the adjacency array
of $u$, the corresponding DFS forest
is the \emph{lexicographic} DFS forest of the
adjacency-array representation.

The following lemmas describe two auxiliary
data structures that we use repeatedly:\
the \emph{choice dictionary} of
Kammer and
Hagerup~\cite{Hag17,HagK16}
and the \emph{ternary array} of
Dodis, P\v atra\c scu and Thorup~\cite[Theorem~1]{DodPT10}.

\begin{lemma}
\label{lem:choice}
There is a data structure that,
for every $n\in\TbbbN$,
can be initialized
for \emph{universe size} $n$
in constant time
and subsequently occupies $n+O({n/{\log n}})$ bits
and maintains an initially empty subset $S$ of
$\{1,\ldots,n\}$ under insertion,
deletion, membership queries and the operation
\Tvn{choice} (return an arbitrary element of $S$)
in constant time
as well as iteration over $S$ in $O(|S|+1)$ time.
\end{lemma}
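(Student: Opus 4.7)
The plan is to represent $S$ primarily with a bit vector $B$ of $n$ bits, grouped into $q=\Tceil{n/w}$ consecutive blocks of $w=\Theta(\log n)$ bits, so that membership, insertion and deletion reduce to constant-time word operations on $B$. The call \Tvn{choice} then amounts to finding a block whose word is nonzero and extracting a set bit via a standard constant-time ``find-first-set'' operation (precomputable in a table of $o(n)$ bits). To locate such a block in $O(1)$ time I would maintain the subset $N\subseteq\{1,\ldots,q\}$ of non-empty block indices recursively using the same interface: each update to $B$ changes $N$ at most once, namely when the affected word crosses between zero and nonzero, and that transition is detectable by comparing the pre- and post-update word to zero. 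Since $q=O(n/\log n)$, one or two levels of recursion reduce the universe size to $O(n/\log^2 n)$, at which point a doubly linked list of non-empty blocks taking $O(n/\log n)$ bits handles all operations directly. Summed over the layers, the total space is $n+O(n/\log n)$ bits.

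For iteration I would thread a doubly linked list through the indices in $N$ and, within each visited block, enumerate set bits by repeated find-first-set applied to a copy of the word, giving $O(1)$ time per reported element and $O(|S|+1)$ overall. For constant-time initialization I would apply the classical Briggs--Torczon indirect-addressing trick: an auxiliary stack $T$ of ``touched'' word indices of $B$ together with a cross-reference array $R$ lets an uninitialized word be treated as zero on first access and validated before its first write, so that the whole data structure is brought up by setting a constant number of counters and pointers. The trick costs $O(n/\log n)$ extra bits, fits inside the budget, and composes cleanly with the recursive layer above $B$.

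The hard part will be keeping the invariants mutually consistent: the correspondence between $B$, the set $N$ of non-empty block indices, and the linked list threaded through $N$ must survive every insertion and deletion; the recursion must bottom out cleanly with the claimed space bound; and the lazy initialization of each layer must not interfere with the ones below it. Once this architecture is fixed the remaining work is routine verification, and I would appeal to the detailed constructions of Kammer and Hagerup~\cite{Hag17,HagK16} for the final implementation and its analysis.
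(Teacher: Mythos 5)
The paper does not prove this lemma at all: it is imported verbatim from the cited works of Hagerup and Kammer~\cite{Hag17,HagK16}, so there is no in-paper argument to compare yours against. Your sketch reproduces the natural folklore architecture---a bit vector $B$, word-sized blocks, a recursive summary $N$ of the nonempty blocks, and find-first-set inside a block---and that skeleton is sound: membership, insertion, deletion and \Tvn{choice} in constant time, and iteration in $O(|S|+1)$ time (each nonempty block contributes at least one element of $S$), all follow once the supporting structures exist within the space budget.

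The genuine gaps are in the space accounting, and they sit exactly where the published constructions have to work hard. First, the classical Briggs--Torczon scheme applied to the $\Theta({n/{\log n}})$ words of $B$ keeps two pointer arrays indexed by word number, i.e.\ $\Theta(({n/{\log n}})\cdot\log n)=\Theta(n)$ extra bits, not the $O({n/{\log n}})$ you claim; coarsening the granularity to save space forces you to zero $\omega(1)$ words when a region is first touched, which breaks worst-case constant time unless further machinery is added, and the same issue recurs at every level of your recursion. Second, a doubly linked list ``threaded through the indices in $N$'' at the top level costs $\Theta(\log n)$ bits per block, hence $\Theta(n)$ bits again; iteration over $N$ must instead be obtained recursively, with an explicit linked list only at the bottommost level where the universe has shrunk to $O({n/{\log^2 n}})$. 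Since achieving redundancy $O({n/{\log n}})$ simultaneously with constant-time initialization is precisely the technical content of the choice dictionary, your closing appeal to~\cite{Hag17,HagK16} ``for the final implementation and its analysis'' is not a finishing touch but the actual proof.
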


\begin{lemma}
\label{lem:ternary}
There is a data structure that
can be initialized with an arbitrary $n\in\TbbbN$
in $O(\log n)$ time
and subsequently occupies $n\log_2 3+O((\log n)^2)$ bits
and maintains a sequence drawn from
$\{0,1,2\}^n$ under constant-time reading and
writing of individual elements of the sequence.
\end{lemma}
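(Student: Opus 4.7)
The plan is to pack the ternary digits via arithmetic coding. Partition the $n$ positions into blocks of $B=\Theta(\log n)$ consecutive digits, with $B$ chosen so that $3^B$ fits in one machine word, and represent each block by its base-$3$ value $x=\sum_{j=0}^{B-1}x_j 3^j$, an integer in $\{0,\ldots,3^B-1\}$ that fits in $\lceil B\log_2 3\rceil$ bits. A table precomputed in $O(\log n)$ time and occupying $O((\log n)^2)$ bits maps each pair $(x,j)$, with $x$ restricted to the low-order half of a block, to the $j$-th base-$3$ digit of the corresponding block and supports digit replacement, so that reads and writes within a block take constant time once the block is loaded into a register. Locating the block containing a given position is a simple arithmetic calculation.

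Storing each block in $\lceil B\log_2 3\rceil$ dedicated bits would give a naive space bound of $n\log_2 3+O(n/\log n)$, which is too large. To sharpen this to $n\log_2 3+O((\log n)^2)$, I would apply the ``spillover'' idea of Dodis, P\v atra\c scu and Thorup: reserve only $\lfloor B\log_2 3\rfloor$ bits per block in the main array and collect the leftover fractional contributions into a shared auxiliary array of $O((\log n)^2)$ bits, structured so that the portion relevant to any given block occupies a single word, locatable in $O(1)$ time from the block index via a second small precomputed table.

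The principal obstacle is arranging the spillover structure so that updating a single digit touches only $O(1)$ words of memory while collapsing the per-block one-bit waste into a polylogarithmic total. The key property needed is that any single write affects only a constant number of bits of the spillover and that lookups into the spillover are accelerated by the same tabulation machinery as the block-level access; this forces $B$ to be chosen just large enough that the fractional part of $B\log_2 3$ is small, so that the spillover array really fits in $O((\log n)^2)$ bits. Finally, the $O(\log n)$ initialization time, which precludes literally zeroing the $\Theta(n)$-bit main array, is obtained by building the two precomputed tables in $O(\log n)$ operations and relying on a standard in-place lazy-initialization trick that returns the default digit $0$ for any read from a location not yet written.
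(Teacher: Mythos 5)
First, a point of comparison: the paper does not prove this lemma at all---it is quoted directly from Dodis, P\v atra\c scu and Thorup \cite[Theorem~1]{DodPT10}---so what you are really attempting is a reproof of their theorem. Your block decomposition, the in-register base-$3$ arithmetic via an $O((\log n)^2)$-bit lookup table, and the observation that the naive per-block ceiling wastes $\Theta(n/\log n)$ bits are all fine. The gap is in the step that is supposed to repair that waste. If a block keeps only $\lfloor B\log_2 3\rfloor$ dedicated bits, its residual ``spill'' is an element of a universe of size $\lceil 3^B/2^{\lfloor B\log_2 3\rfloor}\rceil=2$ (since $\log_2 3$ is irrational, this universe is never trivial), and in the worst case the $\Theta(n/\log n)$ spills are independent, genuinely two-valued quantities. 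No shared auxiliary array of $O((\log n)^2)$ bits can encode $\Theta(n/\log n)$ independent bits, for plain counting reasons, and giving each block its own ``single word, locatable in $O(1)$ time'' inside that array is likewise impossible. Choosing $B$ so that the fractional part of $B\log_2 3$ is small does not help: it lowers the \emph{entropy} of each spill but not the \emph{size of its universe}, and a spill stored in isolation still costs a full bit; moreover, equidistribution of $\{B\log_2 3\}$ prevents you from pushing the fractional part below roughly $1/\log n$ for $B=\Theta(\log n)$, so even the most favourable ceiling-based accounting leaves $\Theta(n/\log^2 n)$ wasted bits.

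The content of the Dodis--P\v atra\c scu--Thorup theorem is exactly the mechanism your sketch lacks: a spill is never stored by itself but is re-injected into the encoding of an adjacent block or a parent node of an aggregation structure---one re-encodes an element of a universe of size $K\cdot 3^B$ as some whole bits plus a \emph{new} spill, losing only $O(1/K)$ bits of redundancy per recombination---and the structure is arranged so that a single write still touches only $O(1)$ such nodes. Reconstructing that composition is the entire difficulty of the theorem, and ``apply the spillover idea'' does not substitute for it. A secondary issue: with only $O((\log n)^2)$ bits of slack you cannot invoke the classical lazy-initialization trick, which needs $\Theta(n\log n)$ extra bits, so the $O(\log n)$-time initialization must also come out of the specific construction. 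Given that the paper itself simply cites \cite{DodPT10}, the honest options are to do the same or to carry out their spillover composition in full; as written, your argument does not establish the space bound.
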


\section{Density-Dependent Bounds}

\subsection{Depth-First Search}

\begin{theorem}
\label{thm:dfs}
A DFS of a directed or undirected graph $G=(V,E)$
with $n$ vertices and $m$ edges can be
carried out in $O(n+m)$ time with 
at most any of the following
numbers of bits of working memory:
\begin{itemizewithlabel}
\item[(a)]
$n+L_{-1}(G)+O(\log n)$;
\item[(b)]
$n+m+O(\log n)$;
\item[(c)]
$n\log(2+{{8 m}/n})+O(\log n)$.
\end{itemizewithlabel}
\end{theorem}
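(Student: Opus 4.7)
The plan is to follow the sketch given in Section~\ref{sec:techniques}. Three pieces of state are maintained: (i) an $n$-bit Boolean array $A$ recording whether each vertex has yet been discovered; (ii) a constant number of $O(\log n)$-bit registers holding the current (top) gray-path vertex $w$, the iteration index $i$ in $w$'s adjacency array, the position $p$ of the predecessor of $w$ in that same array (with a sentinel if $w$ is a DFS-tree root), the next tree-root candidate, and a pointer to the top of a bit-packed stack $\mathcal S$; and (iii) the stack $\mathcal S$ itself, holding one \emph{turn value} per internal vertex of the current gray path, encoded in variable length. The turn value for a gray-path vertex $v$ with predecessor $u$ and successor $v'$ is the cyclic distance from the position of $u$ in $\mathrm{adj}(v)$ to the position of $v'$; it lies in $\{1,\dots,d_v-1\}$ and is therefore encoded in $\lceil\log_2(d_v-1)\rceil$ bits.

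I would then describe the push/pop maintenance. On pushing a newly discovered neighbor $w'=\mathrm{head}(w,i)$, we bit-pack $t=(i-p)\bmod d_w$ into $\lceil\log_2(d_w-1)\rceil$ bits onto $\mathcal S$, mark $w'$ in $A$, and update $(w,p,i)\gets(w',\mathrm{mate}(w,i),0)$. Exploration of $w$ completes when the iteration wraps around to $p$ (or when $w$ is a root and $i$ reaches $d_w$); at that point the predecessor $u$ of $w$ is $\mathrm{head}(w,p)$, from which the length $\lceil\log_2(d_u-1)\rceil$ of the top stack entry is known. Reading that top entry as $t$ and letting $q=\mathrm{mate}(w,p)$, we restore $(w,p,i)\gets(u,(q-t)\bmod d_u,q+1)$ and discard the bits from $\mathcal S$. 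Roots push nothing and carry the sentinel predecessor. For directed graphs the only change is that $i$ advances only through out-adjacency while $p$ and $\mathrm{mate}$ index total adjacency; the formulas are unaltered once we view $\mathrm{adj}$ as the concatenation of in- and out-adjacency. Each push and pop runs in $O(1)$ word-RAM operations, and each edge contributes a bounded number of times, so the total running time is $O(n+m)$.

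For the space bound, the peak content of $\mathcal S$ is at most one turn value per internal gray-path vertex, and vertices of degree at most~$2$ contribute nothing (degree-$1$ vertices cannot be internal, and for degree-$2$ vertices the turn is forced and $\lceil\log_2(d_v-1)\rceil=0$). Hence $|\mathcal S|\le L_{-1}(G)$ at every moment, giving part~(a) after adding the $n$ bits of $A$ and the $O(\log n)$ bits of registers and top-pointer. For part~(b), an easy case check gives $\lceil\log_2(d-1)\rceil\le d/2$ for every integer $d\ge 3$; summing yields $L_{-1}(G)\le\tfrac12\sum_{v\in V}d_v=m$ in the undirected case, and the analogous relation $\sum_v d_v^{\mathrm{in}}+\sum_v d_v^{\mathrm{out}}=2m$ handles the directed case. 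Part~(c) follows from the trivial monotonicity $L_{-1}(G)\le L_1(G)$ together with Lemma~\ref{lem:tolog}(a), since $n+n\log(1+4m/n)=n\log(2+8m/n)$.

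The main obstacle I expect is the bookkeeping around variable-length entries in $\mathcal S$. Because $\lceil\log_2(d_v-1)\rceil$ varies with $v$, the top entry can be decoded only after the identity of the corresponding vertex has been determined, and the top-of-stack pointer must be advanced by exactly the right number of bits on every push and pop. The required identity is always available (it is $\mathrm{head}(w,p)$ at a pop and $w$ itself at a push), and constant-time reads and writes of bit fields that may straddle word boundaries are standard on a word-RAM with word length $\Omega(\log(n+m))$. The remaining verification --- that the sequence of vertex discoveries coincides with that of a textbook recursive DFS using the same adjacency orders --- is a routine induction on the number of push/pop events.
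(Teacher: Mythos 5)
Your proposal is correct and follows essentially the same route as the paper: the same gray-path stack of variable-length ``turn values'' (with the $n$-bit white array and $O(\log n)$ bits of registers) giving the $n+L_{-1}(G)+O(\log n)$ bound, the same inequality $\Tceil{\log(d-1)}\le {d/2}$ for $d\ge 3$ for part~(b), and the same reduction of part~(c) to Lemma~\ref{lem:tolog}(a). Your write-up is in fact somewhat more explicit than the paper's on the push/pop index arithmetic and on the monotonicity $L_{-1}(G)\le L_1(G)$ needed for part~(c), but these are elaborations of, not departures from, the published argument.
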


\begin{proof}
We first show part~(a) for the case in
which $G$ is undirected.
The algorithm was described in
Section~\ref{sec:techniques}, and it was argued there
that it works in $O(n+m)$ time.
What remains is to bound the number of
bits needed.

If an internal vertex $v$ on the gray path
has degree~$d$, its stack entry can be taken
to be an integer in $\{1,\ldots,d-1\}$ that
indicates the number of edges incident
on $v$ that were explored with $v$ as
the current vertex.
The stack entry can therefore be represented
in $\Tceil{\log(d-1)}$ bits, so that the
entire stack never occupies more than
$L_{-1}(G)$ bits.
In addition to the information on the stack,
the DFS must know for each vertex $v$
whether $v$ is white;
this takes $n$ bits.
(Unless an application calls for it, the
DFS has no need to distinguish
between gray and black vertices.)
Finally the DFS must store a few simple
variables in $O(\log n)$ bits, for a grand total
of $n+L_{-1}(G)+O(\log n)$ bits.
This concludes the proof of part~(a)
for undirected graphs.

If $G$ is directed, we can pretend that the
inneighbors and the outneighbors of each vertex
are stored in the same adjacency array
(whether or not this is the case in the actual
representation of~$G$).
We can then use the same algorithm, except that
an edge $(u,v)$ should not be explored in the
wrong direction, i.e., when $v$ is the
current vertex of the DFS.

To show part~(b) of the theorem,
let $d_v$ be the (total) degree of $v$
for each $v\in V$ and observe that
$\Tceil{\log(d-1)}\le {d/2}$
for all integers $d\ge 3$, so that
$L_{-1}(G)\le ({1/2})\sum_{v\in V} d_v=m$.
Part~(c) follows immediately
from part~(a) by an application
of Lemma~\ref{lem:tolog}(a).
\end{proof}

At the price of introducing a slight complication
in the algorithm, we can obtain another space bound
of $c_1 n+c_2 m+O(\log n)$ bits for a smaller
constant $c_2$.
If $c_1$ is allowed to increase, it is also possible
(but of little interest) to lower $c_2$ as far as
desired towards~0 by treating vertices
of small degree separately
in the analysis.

\begin{theorem}
\label{thm:four}
A DFS of a directed or undirected graph $G$
with $n$ vertices and $m$ edges can be
carried out in $O(n+m)$ time with 
at most $n+({4/5})m+O(\log n)$ bits of working memory.
\end{theorem}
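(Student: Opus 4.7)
The plan is to keep the algorithm of Theorem~\ref{thm:dfs}(a) intact except for one change: the bit-aligned stack, in which each entry pays a rounding penalty of $\lceil\log_2(d_v-1)\rceil - \log_2(d_v-1)$ and thus wastes up to almost $n$ bits in total, is replaced by a single packed \emph{variable-base} integer that pays only the fractional cost $\log_2(d_v-1)$ per entry.  The pivotal arithmetic fact I would establish is the pointwise inequality
\[
   \log_2(d-1) \;\le\; \tfrac{2}{5}\,d \qquad \text{for every integer } d \ge 2,
\]
which is immediate by direct evaluation for $d \in \{2,3,4,5\}$, tight at $d=5$ (where $\log_2 4 = 2$), and, for $d \ge 6$, follows from $\log_2 5 \le 12/5$ together with the observation that $f(d) = \log_2(d-1) - 2d/5$ has $f'(d) = 1/((d-1)\ln 2) - 2/5 < 0$ on $[6,\infty)$, so $f$ remains negative there.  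Summing over the internal gray-path vertices and using $\sum_v d_v \le 2m$ then gives $\sum_v \log_2(d_v - 1) \le (4/5)\,m$.

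Concretely, if the internal gray-path vertices from bottom to top are $v_1,\ldots,v_k$ with degrees $d_i = d_{v_i}$ and turn values $t_i \in \{1,\ldots,d_i - 1\}$, I would maintain the single nonnegative integer
\[
   X \;=\; \sum_{i=1}^{k}(t_i - 1)\prod_{j=i+1}^{k}(d_j - 1)\,.
\]
Since $0 \le X < \prod_i (d_i - 1)$, storing $X$ requires at most $\lceil\sum_i\log_2(d_i-1)\rceil \le (4/5)\,m + 1$ bits, allocated in $\lceil((4/5)m+1)/w\rceil$ machine words, for a total of $(4/5)\,m + O(\log n)$ bits of stack memory.  A push of $(d,t)$ is realized by $X \leftarrow X(d-1) + (t-1)$, and a pop extracts $t - 1 = X \bmod (d_k-1)$ and replaces $X$ by $\lfloor X/(d_k-1)\rfloor$; the identity of $v_k$ (and hence its degree $d_k$) is recoverable in constant time from the current vertex and its predecessor index exactly as in the proof of Theorem~\ref{thm:dfs}.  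Together with the $n$-bit white-vertex bitmap and $O(\log n)$ bits of auxiliary pointers and counters this gives the claimed bound of $n + (4/5)\,m + O(\log n)$ bits, and the directed case is handled by the same device used in Theorem~\ref{thm:dfs}.

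The main technical hurdle is carrying out the stack operations in total $O(n+m)$ time, since a naive ripple-carry multiplication of $X$ by $d-1$ propagates through $\Theta(m/\log n)$ words.  I plan to decouple the stack into a write-once ``main'' part, whose words are never touched once finalised, and a short ``top buffer'' of at most $w - O(\log n)$ bits that absorbs every push and pop in $O(1)$ time directly.  Whenever the buffer is about to overflow---or to become empty when a further pop is needed---a single flush moves its content into or out of the main part as one word, costing $O(1)$ per flush and thus $O(1)$ amortised per stack operation.  The per-word waste of this scheme is at most $\log_2 d_{\max} = O(\log n)$ bits, confined to the single currently-active buffer, so the total extra storage beyond the information-theoretic optimum of $\lceil\sum_i\log_2(d_i-1)\rceil$ remains $O(\log n)$ as required.
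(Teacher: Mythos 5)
Your route differs from the paper's: instead of grouping same-degree stack entries for the three exceptional degrees $4$, $6$ and $7$ (the only integers $d\ge 3$ with $\Tceil{\log(d-1)}>({2/5})d$) into fixed-width combined entries, you try to eliminate the ceiling altogether by packing the whole stack into one mixed-radix integer $X$, relying on the pointwise inequality $\log_2(d-1)\le({2/5})d$. That inequality is correct (and tight at $d=5$), and the information-theoretic part of your argument is sound. The gap is in the claim that your buffer-and-flush scheme realizes $\Tceil{\sum_i\log_2(d_i-1)}+O(\log n)$ bits with constant-time operations. Once the main part consists of \emph{independently finalized} chunks, it no longer stores the single integer $X$ but a concatenation of fixed-width encodings of chunk values $Y_k<P_k$ with $P_k=\prod_{i\in I_k}(d_i-1)$; since the $P_k$ are generally not powers of two, each finalized chunk pays its own rounding of up to one bit (and, if you insist on whole-word chunks, up to $\log_2 d_{\max}$ bits), for a total of $\Theta({m/{\log n}})$ or worse --- not $O(\log n)$, and not ``confined to the currently-active buffer.'' Avoiding this would require the flush to compute $X_{\mathrm{main}}\cdot P_k+Y_k$ exactly, i.e., a multi-precision multiplication over $\Theta({m/{\log n}})$ words, which destroys the $O(n+m)$ time bound; you cannot have both. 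Because $\log_2(d-1)=({2/5})d$ exactly at $d=5$, a $5$-regular graph has $\sum_i\log_2(d_i-1)=({4/5})m$ with zero slack, so even an extra $\Theta({m/{\log n}})$ term already violates the stated bound. A second, related omission: the chunk boundaries in your main part depend on the dynamic history of flushes, so to reload or pop a chunk you must know its bit length, which requires storing $\Theta(\log\log n)$ bits of bookkeeping per chunk --- again $\omega(\log n)$ in total and not accounted for.

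The paper sidesteps all of this by keeping the per-entry ceiling representation everywhere it is affordable and grouping only entries of a \emph{common} degree $d\in\{4,6,7\}$ into groups of fixed size ($5$ entries of degree $4$ into $8$ bits, $3$ entries of degree $6$ into $7$ bits, $3$ entries of degree $7$ into $8$ bits), with the incomplete group held in $O(1)$ bits outside the stack. Because each group has a fixed, known width and a fixed number of members, pushes and pops move a predetermined number of bits in constant time, and the per-group ceiling is absorbed since $\Tceil{\log((d-1)^g)}\le({2/5})g d$ holds for these choices. If you want to salvage your approach, you would need either a carry-controlled arithmetic-coding argument that bounds the total cross-chunk redundancy by $O(\log n)$ (nontrivial, and not sketched), or to retreat to the paper's same-degree grouping, which achieves the bound with far less machinery.
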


\begin{proof}
The relation $\Tceil{\log(d-1)}\le ({2/5})d$
is satisfied for all integers $d\ge 3$
except 4, 6 and 7.
To handle the stack entries of vertices of
degree~4, we divide these into groups of 5
and represent each group on the stack
through a single combined
entry of $\Tceil{\log((4-1)^5)}=8$ bits
instead of 5 individual
entries of $\Tceil{\log(4-1)}=2$ bits each.
Since $8\le({2/5})\cdot 5\cdot 4$, the combined
entry is small enough for the bound of the theorem.
At all times, an incomplete group of up to
4 individual entries is kept outside of the
stack in a constant number of bits.
Similarly, groups of 3 entries for vertices
of degree~6 are represented in
$\Tceil{\log((6-1)^3)}=7$ bits,
and groups of 3 entries for vertices
of degree~7 are represented in
$\Tceil{\log((7-1)^3)}=8$ bits.
Since $7\le({2/5)}\cdot 3\cdot 6$
and $8\le({2/5)}\cdot 3\cdot 7$,
this altogether yields a space bound of
$n+({2/5})\cdot 2 m+O(\log n)$ bits.
\end{proof}

The simplicity of the
algorithm of Theorem~\ref{thm:dfs}
is demonstrated in Fig.~\ref{fig:dfs}, which
shows 
an implementation of it for
an undirected input graph $G=(V,E)$.
The description is given in complete
detail except for items like
the declaration of variables
and for the
specification of a bit stack $S$ with the
following two operations in addition to an
appropriate initialization to being empty:
$S.\Tvn{push}(\ell,d)$, where $\ell$ and $d$ are
integers with $d\ge 3$ and $0\le \ell<2^{\Tceil{\log(d-1)}}$,
pushes on $S$ the
$\Tceil{\log(d-1)}$-bit binary representation
of $\ell$, and $S.\Tvn{pop}(d)$,
where $d$ again is an integer with $d\ge 3$,
correspondingly
pops $\Tceil{\log(d-1)}$ bits from $S$,
interprets these as the binary representation
of an integer $\ell$ and returns~$\ell$.
The task of the DFS is assumed to be the execution
of certain \emph{user procedures} at
the appropriate times:
$\Tvn{preprocess}(v)$ and $\Tvn{postprocess}(v)$,
for each $v\in V$, when $v$ turns gray and
when it turns black, respectively,
$\Tvn{explore\_tree\_edge}(v,w)$, for
$\{v,w\}\in E$, when the edge $\{v,w\}$ is
explored with $v$ as the current vertex
and becomes a tree edge,
$\Tvn{retreat\_tree\_edge}(v,w)$ when the
DFS later withdraws from $w$ to~$v$,
and $\Tvn{handle\_back\_edge}(v,w)$,
for $\{v,w\}\in E$, when $\{v,w\}$ is explored
with $v$ as the current vertex
but does not lead to a new vertex.
The code is made slightly more involved
by a special handling of the first and
last vertices of the gray path and by the
fact that no stack entries are
stored for vertices of degree~2.
Timing experiments with an implementation
of the algorithm of Fig.~\ref{fig:dfs} showed it to
be sometimes faster and sometimes slower than an
alternative algorithm that also manages its own
stack but makes no attempt at being space-efficient.

\begin{figure}
\begin{tabbing}
\quad\=\quad\=\quad\=\quad\=\quad\=\hskip 8cm\=\kill
DFS:\\
\>\textbf{for} $v\in V$ \textbf{do} $\Tvn{white}[v]:=\Tvn{true}$;
 $(*$ initially all vertices are undiscovered $*)$\\
\>\textbf{for} $v\in V$ \textbf{do if} $\Tvn{white}[v]$ \textbf{then}
 $(*$ if $v$ has not yet been discovered $*)$\\
\>\>$\Tvn{root}:=v$;
 $(*$ begin a new DFS tree rooted at $v$ $*)$\\
\>\>$k:=-1$; $\ell:=-1$;\\
\>\>$\Tvn{white}[v]:=\Tvn{false}$;\\
\>\>$\Tvn{preprocess}(v)$;
 $(*$ $v=\Tvn{root}$ $*)$\\
\>\>\textbf{repeat}
 $(*$ until breaking out of the loop with \textbf{break} below $*)$\\
\>\>\>$(*$ Invariant: $v$ is the current vertex,
 with data on $v$ stored in $k$ and $\ell$ $*)$\\
\>\>\>$\ell:=\ell+1$;
 $(*$ advance in $v$'s adjacency array $*)$\\
\>\>\>\textbf{if} $\ell<\Tvn{deg}(v)$ \textbf{then}
 $(*$ if $v$ still has unexplored incident edges $*)$\\
\>\>\>\>$w:=\Tvn{head}(v,(k+\ell+1)\bmod\Tvn{deg}(v))$;
 $(*$ the next neighbor of $v$ $*)$\\
\>\>\>\>\textbf{if} $\Tvn{white}[w]$ \textbf{then}\\
\>\>\>\>\>$\Tvn{explore\_tree\_edge}(v,w)$;\\
\>\>\>\>\>\textbf{if} $v=\Tvn{root}$
 \textbf{then} $\Tvn{$\ell$0}:=\ell$;
 $(*$ save $\ell$ in $\Tvn{$\ell$0}$ rather than on $S$ $*)$\\
\>\>\>\>\>\textbf{else if} $\Tvn{deg}(v)>2$
 \textbf{then} $S.\Tvn{push}(\ell,\Tvn{deg}(v))$;
 $(*$ push $\ell$ in nontrivial cases $*)$\\
\>\>\>\>\>$k:=\Tvn{mate}(v,(k+\ell+1)\bmod\Tvn{deg}(v))$;
 $(*$ index at $w$ of $v$ $*)$\\
\>\>\>\>\>$\ell:=-1$; $(*$ prepare to take the first turn out of $w$ $*)$\\
\>\>\>\>\>$v:=w$;
 $(*$ make $w$ the current vertex $*)$\\
\>\>\>\>\>$\Tvn{white}[v]:=\Tvn{false}$;\\
\>\>\>\>\>$\Tvn{preprocess}(v)$;\\
\>\>\>\>\texttt{else} $\Tvn{handle\_back\_edge}(v,w)$;\\
\>\>\>\textbf{else}
 $(*$ $v$ has no more unexplored incident edges $*)$\\
\>\>\>\>\textbf{if} $v=\Tvn{root}$ \textbf{then break};
 $(*$ done at the root -- a DFS tree is finished $*)$\\
\>\>\>\>$u:=\Tvn{head}(v,k)$;
 $(*$ the parent of $v$ in the DFS tree $*)$\\
\>\>\>\>\textbf{if} $u=\Tvn{root}$ \textbf{then} $\ell:=\Tvn{$\ell$0}$;
 $(*$ retrieve $\ell$ from $\Tvn{$\ell$0}$ rather than from $S$ $*)$\\
\>\>\>\>\textbf{else if} $\Tvn{deg}(u)\le 2$ \textbf{then} $\ell:=0$;
 $(*$ trivial case -- nothing stored on $S$ $*)$\\
\>\>\>\>\textbf{else} $\ell:=S.\Tvn{pop}(\Tvn{deg}(u))$;
 $(*$ pop $\ell$ in nontrivial cases $*)$\\
\>\>\>\>$k:=(\Tvn{mate}(v,k)-(\ell+1))\bmod\Tvn{deg}(u)$;
 $(*$ index at $u$ of $u$'s parent $*)$\\
\>\>\>\>$\Tvn{postprocess}(v)$;\\
\>\>\>\>$\Tvn{retreat\_tree\_edge}(u,v)$;\\
\>\>\>\>$v:=u$;
 $(*$ make $u$ the current vertex $*)$\\
\>\>\textbf{forever};\\
\>\>$\Tvn{postprocess}(v)$;
 $(*$ $v=\Tvn{root}$ $*)$
\end{tabbing}
\caption{The algorithm of Theorem~\ref{thm:dfs}
for an undirected input graph $G=(V,E)$.}
\label{fig:dfs}
\end{figure}

\subsection{Strongly Connected Components and
Topological Sorting}

\begin{theorem}
\label{thm:scc1}
The
strongly connected components of a directed
graph with $n$ vertices and $m$ edges can be computed
with their vertices and/or edges
in $O(n+m)$ time with $n\log_2 3+({{14}/5})m+O((\log n)^2)$
bits of working memory.
\end{theorem}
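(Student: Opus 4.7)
The plan is to execute Kosaraju's two-pass SCC algorithm on top of the space-efficient DFS of Theorem~\ref{thm:four}. Both passes are instances of that DFS: the first on $G$ itself, the second on the transpose $G^{\mathrm{T}}$. The second pass can be carried out at no loss of efficiency because the representation described in Section~\ref{sec:prelim} exposes inneighbor arrays together with cross links. Since the two passes are run sequentially, only one DFS stack — of size at most $(4/5)m + O(\log n)$ bits — is in memory at a time.

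Throughout the algorithm, the vertex states (white/gray/black, or equivalently unexplored/pending/assigned-to-SCC) would be maintained in the ternary array of Lemma~\ref{lem:ternary}, at a cost of $n\log_2 3 + O((\log n)^2)$ bits, rather than in the single-bit white array of Fig.~\ref{fig:dfs}. This upgrade is needed so that the second DFS can distinguish vertices that have already been fused into completed SCCs from those still eligible as starting points.

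The crucial piece is the bridge between the two passes: the second DFS must process vertices in reverse postorder of the first DFS. Storing an explicit permutation or table of finish times would cost $\Theta(n\log n)$ bits, which is far too much. Instead, I would record during the first DFS an auxiliary $2m$-bit structure — for instance, a two-bit-per-directed-edge classification into tree, forward, back and cross edges — that captures the DFS forest implicitly. Reverse postorder can then be emitted on demand by walking this recorded forest in time $O(n+m)$, combined with a variant of the coarse-grained reversal technique of Elmasry et al.~\cite{ElmHK15} to avoid re-running the full first DFS. Summing up, the peak working memory is $n\log_2 3 + (4/5)m + 2m + O((\log n)^2) = n\log_2 3 + (14/5)m + O((\log n)^2)$ bits, as required.

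The main obstacle is making the reverse-postorder step fit in the $2m$-bit auxiliary budget while still running in $O(n+m)$ total time; the naive alternatives — a permutation, finish timestamps, or explicit tree-edge pointers of $\lceil\log d_v\rceil$ bits per vertex — all violate either the space or the time bound. The two-bit-per-directed-edge encoding together with coarse-grained reversal should be exactly tight enough, and verifying this balance is the delicate part of the argument.
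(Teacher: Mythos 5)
Your architecture is the paper's: Kosaraju--Sharir on top of the DFS of Theorem~\ref{thm:four}, a ternary array (Lemma~\ref{lem:ternary}) supplying the $n\log_2 3$ term, and a $2m$-bit record of the first DFS from which reverse postorder is regenerated, giving $n\log_2 3+({4/5})m+2m+O((\log n)^2)$ bits. The gap is exactly where you flag it: you never establish that your $2m$-bit structure supports emitting reverse postorder in $O(n+m)$ time with only $O(\log n)$ further bits, and the fix you gesture at---``coarse-grained reversal'' from \cite{ElmHK15}---is the wrong tool here; it is a technique for re-running DFS phases when the forest cannot be stored at all, and once you have spent $2m$ bits storing the forest it is neither needed nor clear how it combines with your encoding. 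The paper's device is more direct: during the first DFS it writes a bit string $B$ of exactly $2m$ bits, one bit per traversal event (a \texttt{1} for each discovery or retreat over a tree edge, a \texttt{0} for each skipped incidence-array entry), so that $B$ is an Euler-tour transcript that can be walked \emph{backwards}, together with an $n$-bit root marker $A$ (merged into the ternary array) to jump from one tree's tour to the end of the preceding one; walking $B$ backwards yields reverse postorder directly and interleaves with the second DFS. Your per-edge tree/nontree classification could also be made to work---reverse postorder is a preorder traversal that visits children in reverse exploration order, implementable with cross links and backward cyclic scans of the incidence arrays---but that observation, not coarse-grained reversal, is the missing argument, and you would additionally need the root information (the paper's array $A$), which your space accounting omits.

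A second, smaller omission: the theorem promises the components ``with their vertices and/or edges.'' Outputting edges requires deciding, for each edge $(v,w)$ explored in the second DFS, whether $w$ lies in the tree currently under construction or in an older one; the paper uses the third state of the ternary array for precisely this distinction and re-traverses each completed tree to downgrade its vertices from ``current'' to ``older.'' Your three states (unexplored/pending/assigned) are the right ones, but the recoloring pass and the resulting correctness of the intra- versus inter-component edge classification need to be spelled out.
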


\begin{proof}
Let $G$ be the input graph and let $\cev{G}$ be the
directed graph obtained from $G$ by replacing each
edge $(u,v)$ by the antiparallel edge~$(v,u)$.
We use an algorithm attributed to Kosaraju and
Sharir in~\cite{AhoHU83} that identifies the
vertex set of each
SCC as that of a DFS tree
constructed by a standard DFS of $\cev{G}$
that, however,
employs as its root order the
reverse postorder defined by an (arbitrary) DFS of~$G$.

Consider each vertex $v$ in $G$ to have a
circular incidence array that contains 
all edges entering $v$ as well as all edges leaving $v$.
A DFS of $G$ can be viewed as entering
each nonroot vertex $v$ at a particular (tree) edge
and each root $v$ at a fixed position in its incidence
array and eventually traversing $v$'s incidence
array exactly once from that entry point, classifying
certain edges out of $v$ as tree edges and skipping over
the remaining edges, either because they lead
to vertices that were already discovered or
because they enter~$v$, before finally,
if $v$ is a nonroot,
retreating over the tree edge to $v$'s parent.
During such a DFS of $G$ that uses the
root order $1,\ldots,n$, we construct a bit
sequence $B$ by appending a \texttt{1} to an
initially empty sequence whenever the DFS
discovers a new vertex or withdraws over a
tree edge and by appending a \texttt{0}
whenever the DFS skips over an edge.
The total number of bits in $B$ is exactly $2 m$,
and $B$ can be seen to represent an Euler tour
of each tree in the forest $F$
defined by the DFS in a natural way.
We also use an array $A$
of $n$ bits to mark those vertices
that are roots in~$F$.
Observe that the pair $(A,B)$ supports an
\emph{Euler traversal} that, in $O(n+m)$ time
and using only $O(\log n)$ additional bits,
enumerates the vertices in $G$
in reverse postorder with respect to~$F$.
In particular, whenever an Euler tour of
a tree in $F$ with root $r$ has been followed backwards
completely from end to start,
$A$ is used to find the end vertex of the next Euler tour,
if any, as the largest root smaller than~$r$.

We carry out a DFS of $\cev{G}$, interleaved with an
execution of the Euler traversal that supplies
new root vertices as needed,
and output the vertex set
of each resulting tree as an SCC.
The total time spent is $O(n+m)$.
We could execute the algorithm using
$n$ bits for $A$,
$2 m$ bits for $B$ and, according to
Theorem~\ref{thm:four}, $n+({4/5})m+O(\log n)$
bits for the depth-first searches.
Recall, however, that the space bound of
Theorem~\ref{thm:four} is obtained as the sum of $n$ bits
for an array $\Tvn{white}$ and
$({4/5})m+O(\log n)$ bits for the DFS stack
and related variables.
It turns out that we can realize $A$ and \Tvn{white}
together through a single ternary array
with $n$ entries.
To see this, it suffices in the case of the DFS of $G$ to note
that a vertex classified as a root certainly is not white.
For the DFS of $\cev{G}$, assume first that we want
to output only the vertices of the strongly
connected components, as is standard.
Then even a binary array would
suffice---we could use the same binary value to
denote both ``root'' and ``not white''.
The reason for this is,
on the one hand, that
when the Euler traversal has entered a tree
$T$ with root~$r$, it will never again
need to inspect $A[w]$ for any $w\ge r$ and,
on the other hand,
that every vertex $w$ reachable in $\cev{G}$
from a vertex $v$ in $T$ must satisfy $w\ge r$---otherwise
$v$ would belong to an earlier tree
(with respect to the DFS of~$G$) and not to~$T$.

If we want to output not only the vertices,
but also the edges of each SCC and
perhaps to highlight those edges whose endpoints
belong to different strongly connected
components (the ``inter-component'' edges),
we need to know for each edge $(v,w)$
explored during the DFS of $\cev{G}$
whether $w$ belongs to the
DFS tree under construction at that time
(then $(v,w)$ is an edge of the current SCC)
or to an older DFS tree
(then $(v,w)$ is an ``inter-component'' edge).
We solve this problem again resorting to a
ternary array, splitting the value ``not white'' into
``not white, but in the current tree'' and
``in an older tree''.
Whenever the DFS of $\cev{G}$ completes a tree,
we repeat the DFS of that tree, treating the color
``not white, but in the current tree'' as ``white''
and replacing all its occurrences by
``in an older tree''.
The space bound follows from Lemma~\ref{lem:ternary}.
\end{proof}

When $m$ is larger relative to~$n$,
it is advantageous, instead
of storing the bit vector $B$, to store for each
vertex $v$ a \emph{parent pointer} of
$\Tceil{\log(d+1)}$ bits, where $d$ is the
indegree of $v$, that indicates $v$'s parent
in the DFS forest of $G$ or no parent at all
(i.e., $v$ is a root).
For this we need the standard
\emph{static space allocation}:

\begin{lemma}
\label{lem:static}
There is a data structure that can be initialized
for a positive integer $n$ and $n$ nonnegative integers
$\ell_1,\ldots,\ell_n$ in $O(n+N)$ time, where
$\ell_j=O(\log n)$ for $j=1,\ldots,n$ and $N=\sum_{j=1}^n \ell_j$,
and subsequently occupies $(n+2 N)(1+O({{\log\log n}/{\log n}}))$
bits and realizes an array $A[1\Ttwodots n]$ of
entries of $\ell_1,\ldots,\ell_n$ bits
under constant-time reading
and writing of individual entries in~$A$.
\end{lemma}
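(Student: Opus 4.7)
The plan is to store the entries concatenated in a data bit array $D$ of exactly $N$ bits, and to encode the lengths $\ell_1,\ldots,\ell_n$ by an auxiliary bit vector $B$ that supports constant-time select queries. I would take $B$ to be the $(n+N)$-bit string obtained by writing, for each $i$ from $1$ to $n$, a single $1$-bit followed by $\ell_i$ zero bits, so that $B$ contains exactly $n$ ones and $N$ zeros. The key identity is that the $i$-th entry of $A$ occupies bits $\mathrm{select}_1(B,i)-i$ through $\mathrm{select}_1(B,i+1)-i-2$ of $D$ (with the convention $\mathrm{select}_1(B,n+1)=|B|+1$), and its length $\ell_i=\mathrm{select}_1(B,i+1)-\mathrm{select}_1(B,i)-1$ is recovered on the fly from two select queries.

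Next I would equip $B$ with one of the standard constant-time select structures in the style of~\cite{Cla96}; these can be arranged to have redundancy $O(|B|\log\log|B|/\log|B|)$ bits over the bit vector itself. Since each $\ell_j$ is $O(\log n)$ we have $N=O(n\log n)$ and hence $|B|=n+N=O(n\log n)$, so $\log|B|=\Theta(\log n)$ and the redundancy is $O((n+N)\log\log n/\log n)$ bits. Adding the $N$ bits of $D$, the $n+N$ bits of $B$, and this redundancy yields the claimed $(n+2N)(1+O(\log\log n/\log n))$.

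For constant-time read/write of entry $i$, I first obtain the start position $\mathrm{select}_1(B,i)-i$ and the length $\ell_i$ with one or two select queries, and then read or overwrite the $\ell_i=O(\log n)$ bits of $D$ beginning at that position; since $\ell_i$ fits in a constant number of machine words, this reduces to a constant number of standard word-RAM bit-extraction and bit-insertion operations. For initialization, $D$ can be zero-initialized in $O((n+N)/\log n+1)=O(n+N)$ time, $B$ can be assembled by one linear pass through $\ell_1,\ldots,\ell_n$ (the unary block of $\ell_i$ zeros is written by a constant number of word stores because $\ell_i$ is at most the word length), and the select directory on $B$ can be built in $O(|B|)=O(n+N)$ time by the usual two-level precomputation.

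The main (minor) obstacle is precisely keeping the directory down to the $(1+O(\log\log n/\log n))$ factor: a naive offset table that stores one pointer per entry would cost $\Theta(n\log(n+N))$ bits, which is far too much when $N=O(n)$. The unary length encoding replaces this with a single bit vector of size $n+N$, which is what the statement is budgeting for with the $2N$ (rather than $N$) in its leading data-size term, and select rather than arithmetic decoding is what preserves constant-time access.
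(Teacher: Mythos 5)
Your proposal is correct and follows essentially the same route as the paper's own proof: entries concatenated in an $N$-bit data array, lengths encoded in unary in an $(n+N)$-bit vector $B$ (the paper writes $\texttt{0}^{\ell_i}\texttt{1}$ where you write $\texttt{1}\texttt{0}^{\ell_i}$, an immaterial difference), and a constant-time select structure with $O((n+N)\log\log n/\log n)$ bits of redundancy to locate entry boundaries. The space accounting and the $O(n+N)$ initialization match the paper's exactly.
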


\begin{proof}
Maintain the entries of $A$ in an array $\widetilde{A}[0\Ttwodots N-1]$
of $N$ bits
and store the sequence $B=b_1\cdots b_{n+N}=
\texttt{0}^{\ell_1}\texttt{1}\cdots\texttt{0}^{\ell_n}\texttt{1}$
of $n+N$ bits.
For $k=1,\ldots,n$, $A[k]$ is located in
$\widetilde{A}[\Tvn{select}_B(k-1)-(k-1)\Ttwodots\Tvn{select}_B(k)-k-1]$,
where
$\Tvn{select}_B(k)=\min\{j\in\{0,\ldots,n+N\}\mid\sum_{i=1}^j b_i=k\}$
for $k=0,\ldots,n$,
and $\Tvn{select}_B$ can be evaluated in constant time given
$O({{(n+N)\log\log n}/{\log n}})$ bits of
bookkeeping information~\cite{Gol07,RamRS07}
that can be computed in $O(n+N)$ time.
\end{proof}

\begin{lemma}
\label{lem:parent}
A representation of
the parent pointers of the lexicographic DFS forest of an
adjacency-array representation of a
graph $G$ with $n$ vertices and $m$ edges that allows
constant-time access to the parent of a given vertex
can be stored in
$(n+2 N)(1+O({{\log\log n}/{\log n}}))$ bits
and computed in $O(n+m)$ time with $n$ additional bits,
where $N=L_1(G)$ if $G$ is undirected and
$N=L^{\mathrm in}_1(G)$ if $G$ is directed.
\end{lemma}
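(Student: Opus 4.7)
The plan is to allocate storage for the parent pointers via Lemma~\ref{lem:static} with lengths $\ell_v=\Tceil{\log(d_v+1)}$ for $v=1,\ldots,n$, where $d_v$ is the degree of~$v$ in the undirected case and the indegree in the directed case. This yields an array $A[1\Ttwodots n]$ whose entry $A[v]$ is $\ell_v$ bits wide and can therefore hold any of $d_v+1$ distinct values, enough to encode either an index into $v$'s (in-)adjacency array locating $v$'s parent or a distinguished ``no parent'' value used for roots. The space occupied is $(n+2N)(1+O({{\log\log n}/{\log n}}))$ bits by Lemma~\ref{lem:static}, with $N$ as in the statement of the lemma; isolated vertices contribute $\ell_v=0$, matching the fact that they can only be roots and so need no explicit entry.

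The computation then runs a lexicographic DFS that uses \emph{no explicit stack}: at any moment, the sequence of gray vertices from the root down to the current vertex can be reconstructed by chasing parent pointers in~$A$, so any information that would be pushed on a conventional DFS stack is already recorded. We keep, in $O(\log n)$ bits of state, a current vertex~$c$, a current position~$p$ in $c$'s (out-)adjacency array, and a current root~$r$; we also use an auxiliary $n$-bit array~$W$ of color flags, initially all set. The main loop does the following. If $p<\Tvn{deg}(c)$, look at $w:=\Tvn{head}(c,p)$; if $W[w]$ still marks $w$ as white, clear $W[w]$, set $A[w]:=\Tvn{mate}(c,p)$, and update $(c,p):=(w,0)$; otherwise just increment~$p$. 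If instead $p=\Tvn{deg}(c)$, then $c$ is finished: if $c=r$ the current tree is done and we advance $r$ to the next still-white vertex in $1,\ldots,n$; otherwise we read $k:=A[c]$, recover the parent $u:=\Tvn{head}(c,k)$ and the index $j:=\Tvn{mate}(c,k)$ of $c$ in $u$'s adjacency array, and set $(c,p):=(u,j+1)$. Iterating $r$ over $1,2,\ldots,n$ realizes lexicographic root order, neighbors are always examined in adjacency-array order, and every vertex and edge is touched $O(1)$ times, so the running time is $O(n+m)$ and the additional working memory is $n+O(\log n)$ bits.

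The main obstacle is to argue that the parent-pointer field really can take the place of a full DFS stack. This works because when the search retreats from a child~$c$ back to its parent~$u$, the position in $u$'s adjacency array at which exploration must resume is not free state but is forced to be the immediate successor of the position of $c$ in that array, which is recoverable in constant time as $\Tvn{mate}(c,k)$ with $k:=A[c]$. Hence no per-level information beyond the single parent pointer is ever needed. In the directed case, the cross links in $\Tvn{mate}$ send positions in out-adjacency arrays to positions in in-adjacency arrays and vice versa, so the assignment $A[w]:=\Tvn{mate}(c,p)$ automatically stores an index into $w$'s \emph{in-}adjacency array, matching the space bound with $N=L_1^{\mathrm{in}}(G)$; outgoing exploration at $c$ uses $c$'s out-adjacency array, and incoming edges at $c$ play no role.
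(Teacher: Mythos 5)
Your proposal is correct and follows essentially the same route as the paper: allocate the variable-width parent-pointer array via Lemma~\ref{lem:static}, and run a stackless DFS that retreats by reading the parent pointer and resumes exploration at the adjacency-array position following that of the tree edge, recovered in constant time via $\Tvn{mate}$, with $n$ extra bits for the white flags. The only difference is that you spell out the loop and the directed/isolated-vertex details more explicitly than the paper does.
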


\begin{proof}
The parent pointers themselves can be stored
in $N$ bits, and constant-time access to them can be
provided according to Lemma~\ref{lem:static}.
To compute the parent pointers, carry out a DFS of $G$,
using the $n$ additional bits to store for
each vertex whether it is still white.
When the DFS ends the processing at a vertex $v$,
it follows the parent pointer of $v$ to withdraw to
$v$'s parent $u$ in the DFS forest,
and from there proceeds to explore
the edge that follows $(u,v)$ or $\{u,v\}$ in
$u$'s incidence array, if any, and to store the
appropriate new parent
pointer if this edge leads to a white vertex.
The procedure to follow at the first
exploration of an edge from a newly discovered
vertex is analogous.
\end{proof}

\begin{theorem}
\label{thm:scc2}
The strongly connected components of a directed
graph $G=(V,E)$ with $n$ vertices and $m$ edges can be computed
in $O(n+m)$ time with at most
$(2 n+L_{-1}(G)+2 L^{\mathrm{in}}_1(G))(1+O({{\log\log n}/{\log n}}))
\le 3 n\log(2+{{4 m}/n})(1+O({{\log\log n}/{\log n}}))$
bits of working memory.
\end{theorem}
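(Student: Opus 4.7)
The plan is to imitate the Kosaraju--Sharir algorithm of Theorem~\ref{thm:scc1}, but to replace its $2m$-bit Euler-tour bit vector $B$ by the parent pointers of Lemma~\ref{lem:parent}, which become cheaper than $B$ once $m$ is large enough relative to $n$. Accordingly, the first step is to compute the parent pointers of the lexicographic DFS forest of $G$ via Lemma~\ref{lem:parent}, at a persistent cost of $(n + 2L_1^{\mathrm{in}}(G))(1 + O((\log\log n)/\log n))$ bits and a transient cost of $n$ additional bits, in $O(n+m)$ time; these pointers are retained throughout the computation.

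The second step is to enumerate the vertices of $G$ in reverse postorder using those parent pointers and no DFS stack. The idea is to simulate a ``reverse'' DFS of $G$ that, at each vertex, inspects its tree children in the reverse of the lexicographic order, and to emit every vertex at first entry; an easy induction on subtree size shows that the preorder of such a reverse DFS coincides with the reverse postorder of the lexicographic DFS. The simulation only needs the current vertex $v$ and a pointer $j$ into $v$'s out-adjacency array: to descend from $v$, one scans right-to-left from $j$ for an out-neighbor $w$ whose stored parent pointer identifies $v$ (a constant-time check via $\Tvn{mate}$); to retreat from $v$ to its parent $u$, one applies $\Tvn{mate}$ to $v$'s parent pointer to locate $v$ inside $u$'s out-adjacency array and resumes scanning to its left. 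Roots are tried in order $n, n-1, \ldots, 1$, launching a new traversal at each vertex whose parent pointer signals ``no parent''. Each out-edge is examined once and every retreat is constant-time, so this runs in $O(n+m)$ time in $O(\log n)$ additional bits.

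The final step interleaves this enumeration with a DFS of $\cev{G}$ executed by Theorem~\ref{thm:dfs}(a), using the enumerated vertices as the root order and outputting each resulting $\cev{G}$-DFS tree as an SCC; if edges must be emitted as well, the white-array of the $\cev{G}$-DFS is upgraded to a ternary array with the ``current tree'' versus ``older tree'' distinction as in Theorem~\ref{thm:scc1}, which leaves the asymptotics unchanged. Since the total degree of every vertex in $\cev{G}$ equals its total degree in $G$, the $\cev{G}$-DFS costs $n + L_{-1}(G) + O(\log n)$ bits; added to the parent-pointer storage, the grand total is $(2n + L_{-1}(G) + 2L_1^{\mathrm{in}}(G))(1 + O((\log\log n)/\log n))$ bits. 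For the coarser bound, Lemma~\ref{lem:tolog} gives $L_{-1}(G) \le L_1(G) \le n\log(1 + 4m/n)$ and $L_1^{\mathrm{in}}(G) \le n\log(1 + 2m/n)$, so, using the paper's identity for folding linear terms into logarithms, the parenthesised quantity is at most $n\log(4(1+4m/n)(1+2m/n)^2) \le n\log((2+4m/n)^3) = 3n\log(2+4m/n)$, the inner inequality following from $1 + 4m/n \le 2 + 4m/n$. The main obstacle is verifying that the stackless reverse-postorder enumeration of the second step really is correct and runs in $O(n+m)$ time using only parent pointers and $\Tvn{mate}$ cross-links, with no auxiliary stack whatsoever.
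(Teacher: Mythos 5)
Your proposal is correct and follows essentially the same route as the paper: compute the parent pointers via Lemma~\ref{lem:parent}, let them alone drive the reverse-postorder (Euler) traversal by scanning out-neighbors for vertices that name the current vertex as their parent, interleave this with a DFS of $\cev{G}$, and obtain the logarithmic bound from Lemma~\ref{lem:tolog}; the step you flag as the main obstacle is exactly the observation the paper makes, and your accounting checks out. (Only the side remark that emitting edges via a ternary array ``leaves the asymptotics unchanged'' is slightly loose, since the leading term would grow from $2n$ to $(1+\log_2 3)n$, but the theorem as stated does not require edge output.)
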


\begin{proof}
The parent pointers of a DFS of $G$ by themselves
support the Euler traversal of the proof
of Theorem~\ref{thm:scc1} in $O(n+m)$ time,
using $O(\log n)$ additional bits.
To see this, observe that one can
visit the children of a vertex $u$
by inspecting the outneighbors of $u$ one by one
to see which of them indicate $u$
as their parent
and that the array $A$ is superfluous since
a vertex is a root in the DFS forest if and
only if its parent pointer does not point
to one of its neighbors---a value was
reserved for this purpose.
Thus first compute the parent pointers
(Lemma~\ref{lem:parent}) and then carry
out a DFS of $\cev{G}$ interleaved with
the Euler traversal.
The time needed is $O(n+m)$, and the number of bits is
at most the sum of the bounds of
Theorem~\ref{thm:dfs} and Lemma~\ref{lem:parent}.
To prove the second bound,
use Lemma~\ref{lem:tolog}.
\end{proof}

If the input graph $G$ happens to be acyclic,
the algorithms of Theorems
\ref{thm:scc1} and~\ref{thm:scc2}
output the vertices of $G$ in the order of a
topological sorting.
In the case of Theorem~\ref{thm:scc1} this may
yield the most practical algorithm.
Better space bounds for topological sorting
can, however, be obtained by implementing an
alternative standard algorithm, due to
Knuth~\cite{Knu97}, that repeatedly removes
a vertex of indegree~0 while keeping track
only of the indegrees of all vertices.
This was also suggested by Banerjee
et al.~\cite{BanCR16}.
As mentioned in the discussion
of related work,
they indicated a space bound of $m+3 n+o(n+m)$ bits;
it is not clear to this author, however,
how such a bound is to be proved.

\begin{theorem}
A topological sorting of a directed acyclic
input graph with $n$ vertices and $m$ edges can
be computed in $O(n+m)$ time with at most
any of the following numbers of bits:
\begin{itemizewithlabel}
\item[(a)]
$(2 n+2 L_0^{\mathrm{in}})(1+O({{\log\log n}/{\log n}}))$;
\item[(b)]
$(2 n+({4/3})m)(1+O({{\log\log n}/{\log n}}))$;
\item[(c)]
$2 n\log(2+{{4 m}/n})(1+O({{\log\log n}/{\log n}}))$.
\end{itemizewithlabel}
\end{theorem}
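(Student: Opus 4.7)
The plan is to implement Knuth's algorithm~\cite{Knu97} verbatim: maintain for every vertex its \emph{current indegree}, initialized to $d_v^{\mathrm{in}}$, and repeatedly pick an unprocessed vertex $v$ whose current indegree equals~$0$, output~$v$, and for each outneighbor $w$ of $v$ decrement the current indegree of~$w$. Each edge is touched exactly once (when its tail is extracted) and every vertex exactly once, so the running time will be $O(n+m)$ as long as all underlying data-structure operations run in constant time.

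For the storage I would split the bookkeeping in two. The vertices whose current indegree already equals~$0$ but that have not yet been output are kept in a choice dictionary of universe size~$n$ (Lemma~\ref{lem:choice}), costing $n+O({n/{\log n}})$ bits. The strictly positive current indegrees are held in a static space-allocation array $A$ (Lemma~\ref{lem:static}), where for each vertex $v$ with $d_v^{\mathrm{in}}\ge 1$ we store the quantity ``current indegree minus~$1$'' in $\ell_v=\Tceil{\log_2 d_v^{\mathrm{in}}}$ bits, with $\ell_v=0$ whenever $d_v^{\mathrm{in}}\in\{0,1\}$. A decrement on $w$ then reads $A[w]$: if the returned value is positive, write back that value decremented by~$1$; otherwise the current indegree has just dropped to~$0$, so insert $w$ into the choice dictionary and cease to touch $A[w]$. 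Initial seeding of the choice dictionary with the vertices of indegree~$0$ and computation of the $\ell_v$'s both run in $O(n)$ time using the in-adjacency arrays.

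Adding up the space gives $n+O({n/{\log n}})$ bits for the choice dictionary plus $(n+2 L_0^{\mathrm{in}}(G))(1+O({{\log\log n}/{\log n}}))$ bits for the static array, because $\sum_v \ell_v=L_0^{\mathrm{in}}(G)$; this establishes part~(a). Part~(c) then follows by Lemma~\ref{lem:tolog}(b), via $L_0^{\mathrm{in}}(G)\le L_1^{\mathrm{in}}(G)\le n\log(1+{{2 m}/n})$ together with the identity $2 n+2 n\log(1+{{2 m}/n})=2 n\log(2+{{4 m}/n})$. For part~(b) I would verify the elementary inequality $\Tceil{\log_2 d}\le({2/3})d$ for every integer $d\ge 2$, which yields $L_0^{\mathrm{in}}(G)\le({2/3})\sum_v d_v^{\mathrm{in}}=({2/3})m$.

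The main obstacle is essentially bookkeeping rather than anything conceptually deep: one has to argue that once a vertex $w$ has migrated from $A$ to the choice dictionary, no later decrement touches $A[w]$ again (this is automatic, because each decrement on $w$ corresponds to a distinct incoming edge of $w$, and after the last of these is consumed the vertex is merely waiting to be output), and to check that the degenerate case $\ell_v=0$ with $d_v^{\mathrm{in}}=1$ is handled uniformly by the rule ``if $A[v]$ reads as $0$, insert $v$ into the choice dictionary.'' Once these details are pinned down, the time and space analyses are routine.
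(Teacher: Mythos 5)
Your proposal is correct and follows essentially the same route as the paper's own proof: Knuth's indegree-counting algorithm with a choice dictionary (Lemma~\ref{lem:choice}) for the current indegree-$0$ vertices and static space allocation (Lemma~\ref{lem:static}) holding $\Tceil{\log_2 d_v^{\mathrm{in}}}$ bits per vertex, i.e.\ $d$ distinguishable values for original indegree $d\ge 2$, giving the $2L_0^{\mathrm{in}}$ term. Parts (b) and (c) are derived exactly as in the paper, via $\Tceil{\log_2 d}\le({2/3})d$ for $d\ge 2$ and Lemma~\ref{lem:tolog}(b), respectively.
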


\begin{proof}
Maintain the current set of vertices of indegree~0
in an instance of the choice dictionary
of Lemma~\ref{lem:choice},
which needs $n+O({n/{\log n}})$ bits.
Also maintain the
current indegrees according to Lemma~\ref{lem:static}.
Since we can store an arbitrary value or nothing for
vertices of current indegree~0, we need only distinguish
between $d$ different values for a vertex of
original indegree $d\ge 2$, so that
$(n+2 L_0^{\mathrm{in}})(1+O({{\log\log n}/{\log n}}))$
bits suffice.
With these data structures,
the algorithm of
Knuth~\cite{Knu97} can be executed in $O(n+m)$ time.
This proves part~(a).
Part~(b) follows from part~(a) since
$\Tceil{\log d}\le ({2/3})d$ for all integers $d\ge 2$,
and part~(c) follows from part~(a) with
Lemma~\ref{lem:tolog}(b).
\end{proof}

\subsection{Biconnected and 2-Edge-Connected Components}
\label{subsec:dep-apl}
In this subsection we will see that closely related
algorithms can be used to compute the
cut vertices, the bridges and the
biconnected and 2-edge-connected components of
an undirected graph.
Our algorithms are similar to
those of \cite{BanCR16,ChaRS16,KamKL16}, but whereas the
earlier authors indicated the space bounds only
as $O(n+m)$ or $O(n\log({m/n}))$ bits, we will strive to obtain
small constant factors and indicate these explicitly.

A simple but crucial fact is that for every DFS forest
$F$ of an undirected graph $G$, every edge in $G$
joins an ancestor to a descendant within a tree in~$F$.
DFS is also known to interact harmoniously with
the graph structures of interest
in this subsection
as exemplified, e.g., in the following lemma.

\begin{lemma}
\label{lem:subtree}
Let $F$ be a DFS forest of an undirected graph
$G=(V,E)$ and let $e\in E$.
Then the subgraph $F'$ of $F$ induced by the edges
in $F$ equivalent to $e$ under $\Tsim_G^{\textrm{B}}$
is a subtree of $F$ whose root has degree $1$ in~$F'$.
\end{lemma}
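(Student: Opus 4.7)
Let $B$ denote the block containing $e$, i.e., $E(B)=\{e'\in E:e'\equiv_G^{\mathrm{B}}e\}$, and write $F'=F\cap E(B)$ and $V(B)$ for the vertex set spanned by $E(B)$. Two facts will be used throughout: (i) $B$ is connected as a subgraph of $G$, since any two of its edges lie on a common simple cycle; and (ii) every edge of $G$ joins an $F$-ancestor to an $F$-descendant, as noted at the start of this subsection. The plan is to identify a unique topmost vertex $r\in V(B)$ in $F$, argue that $F'$ is a subtree of $F$ on vertex set $V(B)$ rooted at $r$, and finally observe that $r$ cannot have two children in $F'$ without becoming a cut vertex of $B$.

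First I would let $r$ be the vertex of $V(B)$ of smallest depth in $F$ and show that every $u\in V(B)$ is a descendant of $r$: take a path in $B$ from $r$ to $u$; if $u$ lay outside the subtree of $F$ rooted at $r$, then by~(ii) the path would have to traverse an edge from a descendant of $r$ to a proper ancestor of $r$, yielding a vertex of $V(B)$ strictly above $r$ and contradicting the choice of~$r$. The step I expect to be the main obstacle is the next one: showing that $\{p(v),v\}\in F'$ for every $v\in V(B)\setminus\{r\}$. Let $S_v$ denote the subtree of $F$ rooted at $v$. Since $v\in V(B)\cap V(S_v)$ while $r\in V(B)\setminus V(S_v)$, the connectivity of $B$ forces some edge $e'\in E(B)$ to cross the boundary of $S_v$. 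By~(ii), $e'$ is either the tree edge $\{p(v),v\}$ itself---done---or a back edge $\{a,d\}$ with $a$ a proper ancestor of $v$ and $d$ a descendant of $v$ (possibly $d=v$). In that case, the tree path from $d$ up to $a$ passes through $v$ and hence uses the tree edge $\{p(v),v\}$, so that path together with $\{a,d\}$ forms a simple cycle containing both $\{p(v),v\}$ and $\{a,d\}$, witnessing $\{p(v),v\}\equiv_G^{\mathrm{B}}e$.

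From here the remainder is routine. By the previous step, $F'=\{\{p(v),v\}:v\in V(B)\setminus\{r\}\}$ has $|V(B)|-1$ edges and, by iteratively following parent pointers (each of which lies in $F'$), connects every vertex of $V(B)$ to~$r$; hence $F'$ is a subtree of $F$ with root~$r$. For the degree claim, the case $|E(B)|=1$ is immediate. Otherwise $B$ is $2$-vertex-connected by the standard correspondence between the block equivalence $\equiv_G^{\mathrm{B}}$ and the maximal $2$-connected subgraphs. Suppose for contradiction that two distinct children $c_1,c_2$ of $r$ in $F$ satisfied $\{r,c_1\},\{r,c_2\}\in F'$. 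By the topmost-vertex property of $r$ together with~(ii), every edge of $E(B)$ not incident to $r$ would lie entirely within $V(S_c)$ for a single child $c$ of $r$; so removing $r$ from $B$ would separate the nonempty sets $V(B)\cap V(S_{c_1})$ and $V(B)\cap V(S_{c_2})$, making $r$ a cut vertex of $B$---a contradiction.
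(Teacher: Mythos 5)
Your proof is correct, but it reaches the conclusion by a genuinely different route than the paper. The paper first proves that $F'$ is connected via a ``convexity'' argument on tree paths: if a simple path in $F$ contains edges $e_1,e_2,e_3$ in that order with $e_1,e_3\in F'$, then splicing the tree path with a suitable arc of a common simple cycle of $e_1$ and $e_3$ produces a cycle witnessing $e_2\in F'$; the degree-$1$ claim is then obtained directly by observing that a simple cycle through two $F'$-edges at the root would have to pass through a proper ancestor of the root. You instead identify the root explicitly as the minimum-depth vertex $r$ of $V(B)$, show $V(B)$ lies in the subtree below $r$, and prove via a fundamental-cycle argument on an edge crossing the boundary of $S_v$ that \emph{every} $v\in V(B)\setminus\{r\}$ has its parent edge in $F'$ --- which yields connectedness and, as a bonus, the stronger characterization of $F'$ as exactly the parent-edge tree spanning $V(B)$ (a fact the paper states separately afterwards and uses in the algorithms). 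The one place where you lean on an external ingredient is the degree claim, which you derive from the standard fact that a block with at least two edges is $2$-vertex-connected; this is legitimate, but note that your own setup already gives a self-contained alternative: a common simple cycle through $\{r,c_1\}$ and $\{r,c_2\}$ would have to connect $S_{c_1}$ to $S_{c_2}$ while avoiding $r$, hence (by the ancestor--descendant property of all edges) visit a proper ancestor of $r$, which would then belong to $V(B)$ and contradict the minimality of $r$'s depth. In short, your argument is somewhat longer but more explicit and delivers more structural information; the paper's is terser and entirely internal to the cycle-splicing technique.
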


\begin{proof}
Every edge in $G$ is equivalent under
$\Tsim_G^{\textrm{B}}$ to an edge in~$F$,
so $F'$ is not the empty graph.
Let us first prove that $F'$ is connected.
Suppose for this that a simple path $\pi$ in $F$
contains the edges $e_1$, $e_2$ and $e_3$
in that order and that $e_1$ and $e_3$
belong to~$F'$.
To show that $e_2$ also belongs to~$F'$, let
$C$ be a simple cycle in $G$ that contains $e_1$ and~$e_3$
and let $\pi'$ be the maximal subpath
of $\pi$ that contains $e_2$ and whose internal
vertices do not belong to~$C$.
The endpoints of $\pi'$ lie on $C$, so $\pi$
and a suitably chosen
subpath of $C$ together form a simple cycle
that contains $e_2$ and at least one of
$e_1$ and $e_3$.
Thus $F'$ is indeed a subtree of~$F$ with a root~$u$.
A simple cycle in $G$ that contains two edges in $F'$ incident
on $u$ must necessarily also contain a proper
ancestor of~$u$, contradicting the
fact that the edge between $u$ and its parent
in $F$, if any,
does not belong to~$F'$.
Thus the degree of $u$ in $F'$ is~1.
\end{proof}

Let $F$ be a DFS forest of an undirected
graph $G=(V,E)$ with $n$ vertices
and $m$ edges.
Let us call a subtree $F'$ of $F$ as in Lemma~\ref{lem:subtree}
a \emph{BCC subtree} and its root a \emph{BCC root}.
A vertex common to two edge-disjoint subtrees
of a rooted tree is a root in at least one of the subtrees.
Therefore every cut vertex in $G$ is a BCC root.
Conversely, a BCC root $u$ is also a cut vertex in $G$
unless $u$ is a root in $F$ with only one child.
Every BCC of $G$ consists
precisely of the vertices in a
particular BCC subtree $F'$ and the edges in $G$
that join two such vertices, i.e.,
whose lower endpoint (with respect to~$F$) lies in~$F'$
but is not the root of~$F'$.
An edge is a bridge exactly if, together with
its endpoints, it constitutes a full BCC subtree.
A 2ECC, finally, is either such a 1-edge BCC subtree
or a maximal connected subgraph of $G$
with at least one edge and without bridges.

For each $w\in V$, denote by $P(w)$ the assertion that
$w$ has a parent $v$ in $F$ and $G$ contains at
least one edge between a descendant of $w$ and a proper
ancestor of~$v$.
If $\{v,w\}$ is an edge in $F$ and $v$ is the
parent of~$w$,
$P(w)=\Tvn{false}$ exactly if $v$ is the root
of the BCC subtree that contains $\{v,w\}$.
We can compute $P(w)$ for all $w\in V$ by
initializing all entries in a Boolean array
$Q[1\Ttwodots n]$ to \Tvn{false} and
processing all nontree edges as follows:
To process a nontree edge $\{x,y\}$,
where $y$ is a descendant of $x$,
start at $y$ and follow
the path in $F$ from $y$ to $x$, setting
$Q[z]:=\Tvn{true}$ for every vertex $z$ visited,
but omitting this action for the last two
vertices (namely $x$ and a child of~$x$).
Suppose that we process each nontree edge $\{x,y\}$,
where $y$ is a descendant of~$x$, when a preorder
traversal of $F$ reaches $x$ and before it
proceeds to children of~$x$.
Then we can stop the processing
of $\{x,y\}$ once we reach a vertex $z$ for
which $Q[z]$ already has the value \Tvn{true}---the
same will be true for all outstanding vertices~$z$.
Therefore the processing of all nontree edges
can be carried out in $O(n+m)$ time,
after which $Q[w]=P(w)$ for all $w\in V$.
To solve one of the problems considered in this
subsection, compute the DFS forest $F$
and traverse it to compute $Q$, as just described,
while executing the following additional
problem-specific steps:

\emph{Cut vertices:}
Output each vertex $v$ in $F$ that
has a child $w$ with $P(w)=\Tvn{false}$
and is not a root in $F$ or has two or more children.

\emph{Bridges:}
Output each tree edge $\{u,v\}$, where $u$ is the parent
of $v$, for which $P(v)=\Tvn{false}$ and
$P(w)=\Tvn{false}$ for every child $w$ of~$v$.

\emph{Biconnected components:}
Specialize the traversal of $F$ to always visit a
vertex $w$ with $P(w)=\Tvn{false}$ before a sibling
$w'$ of $w$ with $P(w')=\Tvn{true}$.
To compute the biconnected components of $G$ with
their vertices and edges, when the traversal
withdraws over a tree edge $\{v,w\}$ from a vertex $w$
to its parent~$v$, output the edges in~$G$
that have $w$ as their lower endpoint
(including $\{v,w\}$), output $w$
itself and, if $P(w)=\Tvn{false}$,
also output $v$ and \emph{wrap up}
the current BCC, i.e., except in the case of the
very last component, output a component separator
or increment the component counter.
Visiting the children $w$ of a vertex $v$ with
$P(w)=\Tvn{true}$ after those with $P(w)=\Tvn{false}$
ensures that the vertices and edges of the BCC
that contains $\{v,w\}$
are output together for each $w$ without intervening
vertices and edges of other biconnected components.

\emph{2-edge-connected components:}
Specialize the traversal of $F$ so that for
each vertex $v$, a child $w$ of~$v$
for which $\{v,w\}$ is a bridge is always
visited before a
child $w'$ of $v$ for which $\{v,w'\}$ is
not a bridge.
Suppose that the traversal withdraws from a
vertex $w$ to its parent~$v$.
If $w$ has at least one incident edge that
is not a bridge, output~$w$ and, if 
$\{v,w\}$ is a bridge, wrap up the current 2ECC.
If $\{v,w\}$ is a bridge, 
output $v$, $w$ and $\{v,w\}$ and
wrap up the current 2ECC.
If $\{v,w\}$ is not a bridge, 
output all nonbridge edges of which $w$ is
the lower endpoint (including $\{v,w\}$).
Finally, when the traversal withdraws from
a root $u$ with at least one incident
edge that is not a bridge,
output $u$ and wrap up the current 2ECC.
As above,
visiting those children of a given vertex $v$
that are adjacent to $v$ via bridges before
the other children of~$v$
ensures that the vertices and edges
of the 2ECC that contains several
edges incident on $v$, if any,
are output together without intervening vertices
and edges of other 2-edge-connected components.

When the traversal of $F$ reaches a vertex $v$
with a child $w$, $Q[w]$ will have reached its final
value, $P(w)$, and will never again be written to.
It is now obvious that we can test
at that time whether $v$ is a cut vertex and
whether the edge between $v$ and its parent in $F$,
if any, is a bridge in $O(d+1)$ time,
where $d$ is the degree of~$v$.
It follows that each of the four problems
considered above can be solved in $O(n+m)$ time.
In the most complicated case, that of
2-edge-connected components, in order
to test during the processing of a vertex~$v$
whether an edge $\{v,w\}$ is a bridge,
where $w$ is a child of~$v$ in~$F$,
carry out a ``preliminary visit''
of the children of~$w$ in~$F$.

We can compute the DFS forest $F$ with the algorithm of
Lemma~\ref{lem:parent}, which needs
$(n+2 L_1(G))(1+O({{\log\log n}/{\log n}}))$
bits plus $n$ bits that can be reused.
The subsequent traversal needs $n$ bits for the array~$Q$.
In addition, when the computation of $Q$
described above
processes a nontree edge $\{x,y\}$, it needs to
know whether $y$ is an ancestor
or a descendant of~$x$.
We can use another Boolean array $A[1\Ttwodots n]$ to handle this
issue, ensuring for each $y\in V$ that at all times
$A[y]=\Tvn{true}$ if and only if $y$ is an ancestor
of the current vertex of the traversal of $F$
(i.e., if $y$ is gray).
In some cases, however, we can make do with less space.
Observe that in the computation of cut vertices and
bridges, the value of $Q[v]$ is never again used
after the arrival of the traversal of $F$ at~$v$.
When the traversal reaches $v$ and $Q[v]$ has been
inspected, we can therefore set $Q[v]:=\Tvn{true}$
without detriment to the use of $Q$.
Suppose that when processing a nontree edge $\{x,y\}$
in the computation of $Q$, we consult $Q[y]$
instead of $A[y]$ to know whether $y$ is an ancestor
of the current vertex~$x$.
If $A[y]=\Tvn{true}$, the artificial change to $Q$
introduced above ensures that we necessarily also
have $Q[y]=\Tvn{true}$, so that the algorithm
proceeds correctly.
If $A[y]=\Tvn{false}$, we may have $Q[y]=\Tvn{true}$,
in which case the processing of $\{x,y\}$ stops
immediately, but then $P(y)=\Tvn{true}$
($y$ has not yet been reached by the traversal,
and so $Q[y]$ was not set artificially to \Tvn{true})
and it is correct to do nothing.

If our goal is to compute the biconnected or
2-edge-connected components of $G$ with their
vertices, but not with their edges, we are
in an intermediate situation:
We need to distinguish between three different
combinations of $A[y]$ and $Q[y]$
(now with the original $Q$), but if $Q[y]=\Tvn{true}$
the value of $A[y]$ is immaterial as above,
and $Q[y]$ never changes from
\Tvn{true} to \Tvn{false}.
We can therefore represent $A$ and $Q$ together
through a ternary array with $n$ entries.
Altogether, we have proved the following result.

\begin{theorem}
\label{thm:fourdep}
Given an undirected
graph $G$ with $n$ vertices and $m$ edges,
we can compute the following in $O(n+m)$ time
and with the number of bits indicated:
\begin{itemizewithlabel}
\item[(a)]
The cut vertices and bridges of $G$ with
$(2 n+2 L_1(G))(1+O({{\log\log n}/{\log n}}))$
$=2 n\log(2+{{8 m}/n})(1+O({{\log\log n}/{\log n}}))$ bits;
\item[(b)]
The biconnected and 2-edge-connected components of $G$
with their vertices with
$((1+\log_2 3)n+2 L_1(G))(1+O({{\log\log n}/{\log n}}))$ bits;
\item[(c)]
The biconnected and 2-edge-connected components of $G$
with their edges and possibly vertices with
$(3 n+2 L_1(G))(1+O({{\log\log n}/{\log n}}))$ bits.
\end{itemizewithlabel}
\end{theorem}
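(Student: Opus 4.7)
The plan is to follow exactly the development in the preceding discussion and just account carefully for the bits used by each stored structure. In all three cases the algorithm consists of two phases: first compute the parent-pointer representation of a DFS forest $F$ via Lemma~\ref{lem:parent}, using $(n+2L_1(G))(1+O({{\log\log n}/{\log n}}))$ bits of permanent storage plus $n$ auxiliary bits that become free after the DFS. Second, traverse $F$ using the parent pointers, computing the array $Q[1\Ttwodots n]$ on the fly as described and emitting the appropriate output when each vertex is processed. The running-time bound $O(n+m)$ was argued in the text (the key non-obvious point being that the $Q$-computation charges work to nontree edges, each of which is absorbed at the first ancestor whose $Q$-entry is already \Tvn{true}).

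Having already paid for the parent pointers, I only need to bound the extra storage used during the traversal phase. In general that phase uses two Boolean arrays $A$ and $Q$ of $n$ bits each, where $A[y]$ records whether $y$ is currently an ancestor of the traversal's current vertex; this accounts for part~(c) directly, giving $(3 n+2L_1(G))(1+O({{\log\log n}/{\log n}}))$ bits and covering the case where edges of the BCCs or 2ECCs are to be output. For part~(a) the main observation, already made in the text, is that once the traversal reaches a vertex $v$ and reads $Q[v]$ we are free to overwrite $Q[v]:=\Tvn{true}$; with this modification a single $n$-bit array doubles as both $A$ and $Q$ for the purposes of testing ancestorship during nontree-edge processing, so the traversal needs only $n$ extra bits and we get $(2 n+2L_1(G))(1+O({{\log\log n}/{\log n}}))$. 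The second form in part~(a) follows from Lemma~\ref{lem:tolog}(a) and the convention about rewriting $a n+n\log(b+{{c m}/n})$ as $n\log(2^a b+{{2^a c m}/n})$ noted in Section~\ref{sec:techniques}.

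For part~(b), where we want vertices of components but not edges, the three combinations of $(A[y],Q[y])$ that can arise are (\Tvn{false},\Tvn{false}), (\Tvn{true},\Tvn{false}) and the ``$Q[y]=\Tvn{true}$'' case in which $A[y]$ is immaterial (and $Q$ never flips back from \Tvn{true} to \Tvn{false}). Encoding $A$ and $Q$ jointly as a single sequence over $\{0,1,2\}$ of length $n$ and storing it in the ternary-array structure of Lemma~\ref{lem:ternary} costs $n\log_2 3+O((\log n)^2)$ bits, which together with the parent-pointer storage yields the claimed $((1+\log_2 3)n+2L_1(G))(1+O({{\log\log n}/{\log n}}))$ bound since the $O((\log n)^2)$ term is absorbed.

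The only step that requires genuine care rather than bookkeeping is verifying that the $A$/$Q$ merging tricks of parts~(a) and~(b) really leave every later read of $Q[y]$ correct: I would argue, exactly as sketched just before the theorem, that an artificially promoted entry $Q[y]=\Tvn{true}$ can only be read either when $y$ is an ancestor of the current vertex (in which case $Q[y]=\Tvn{true}$ was going to hold anyway under the original definition), or when $y$ has already been fully processed by the traversal (in which case $P(y)=\Tvn{true}$ and the correct action is to stop propagating along the nontree edge). This correctness verification, rather than the bit counting, is the only delicate point, and it is the main obstacle.
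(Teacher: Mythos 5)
Your plan is correct and is essentially the paper's own proof, which is just the discussion preceding the theorem statement: parent pointers via Lemma~\ref{lem:parent} (with the $n$ auxiliary bits reused), the array $Q$, a separate ancestor array $A$ for part~(c), the overwrite-$Q[v]:=\Tvn{true}$ trick to merge $A$ into $Q$ for part~(a), the ternary-array merge of $A$ and $Q$ via Lemma~\ref{lem:ternary} for part~(b), and the same bit accounting throughout. One small correction to your closing case analysis: since every edge of $G$ joins an ancestor to a descendant in $F$, a ``fully processed'' endpoint $y$ never occurs when a nontree edge $\{x,y\}$ is examined at $x$; the two cases are rather that $y$ is a gray ancestor---where the artificial promotion is precisely what guarantees $Q[y]=\Tvn{true}$ (the unpromoted $Q[y]$ need \emph{not} already be true there), so that $Q[y]$ can stand in for $A[y]$---or that $y$ is an as-yet-unreached descendant with a genuinely true $Q[y]$, in which case $P(y)=\Tvn{true}$ and aborting the propagation is correct.
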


Kammer et al.~\cite{KamKL16} consider the problem of
preprocessing an undirected graph $G$ so as later to
be able to output the vertices and/or edges of a
single BCC, identified via one of its edges, in
time at most proportional to the number
of items output.
Having available the parent pointers of a DFS forest $F$
and the array $Q$ corresponding to~$F$, we can solve the
problem in the following way, which is the
translation of the procedure of
Kammer et al.\ to our setting:
Given a request to output the BCC $H$ that contains
an edge $\{x,y\}$, first follow parent pointers
in parallel from $x$ and $y$ until one of the searches
hits the other endpoint or a root in~$F$.
This allows us to determine which of $x$ and $y$
is an ancestor of the other vertex in time at
most proportional to the number of items to be output.
Then traverse the subtree of $F$ reachable from the
lower endpoint of $\{x,y\}$ without crossing any
edge between a BCC root and its single child,
producing the same output at each
vertex as described above for the output of all
biconnected components.
In addition, at the uniquely defined vertex
$w$ with $P(w)=\Tvn{false}$ visited by the
search, also output the parent $v$ of~$w$
(but without continuing the traversal from~$v$).
In order to carry out this procedure efficiently, we need a
way to iterate over the edges in $F$ incident
on a given vertex $w$ and, if the edges of $H$
are to be output, over the nontree edges that
have $w$ as their lower endpoint.
To this end
we can equip each vertex $w$ of degree $d$ with
a choice dictionary
(Lemma~\ref{lem:choice}) for a universe size of~$d$
that allows us to iterate over the relevant
edges in time at most proportional to their number.
This needs another $2 m+O({m/{\log n}})$ bits.
Very similar constructions allow us to output
the vertices and/or the edges of a single
2-edge-connected component.

\begin{theorem}
There is a data structure that can be initialized
for an undirected graph $G$ with $n$ vertices
and $m$ edges in $O(n+m)$ time, subsequently allows
the vertices and/or the edges of the biconnected
or 2-edge-connected component that
contains a given edge to be output in time
at most proportional to the number of items
output, and uses
$(3 n+2 m+2 L_1(G))(1+O({{\log\log n}/{\log n}}))$
bits.
\end{theorem}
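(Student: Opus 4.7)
The plan is to perform the preprocessing of Subsection~\ref{subsec:dep-apl} once, store the results persistently, and install per-vertex iterators so that each query costs time linear in its output. At initialization, I would use Lemma~\ref{lem:parent} to compute in $O(n+m)$ time the parent pointers of a DFS forest $F$ of $G$, occupying $(n+2L_1(G))(1+O(\log\log n/\log n))$ bits, and then run the procedure described before Theorem~\ref{thm:fourdep} to fill an $n$-bit array $Q$ with $Q[w]=P(w)$ for every $w\in V$. For the 2ECC functionality I would additionally fill an $n$-bit array marking, for each non-root $v\in V$, whether the tree edge from $v$'s parent to $v$ is a bridge; one postorder sweep of $F$ that reads $Q$ suffices. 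Finally, I would equip each vertex $v$ with an instance of the choice dictionary of Lemma~\ref{lem:choice} over a universe of size $d_v=\Tvn{deg}(v)$, marking in it exactly the positions of $v$'s adjacency array whose incident edge is either a tree edge to a child of $v$ in $F$ or a nontree edge whose lower endpoint in $F$ is $v$. Using parent pointers and $\Tvn{mate}$ these marks are placed in $O(n+m)$ total time; their space cost $d+O(d/\log n)$ bits per vertex sums to $2m(1+O(1/\log n))$ bits, and adding the three $n$-bit arrays and the parent-pointer structure yields the stated $(3n+2m+2L_1(G))(1+O(\log\log n/\log n))$-bit total.

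Given a query edge $\{x,y\}$, I would first climb parent pointers in lock-step from $x$ and $y$ until one walk meets the other endpoint or a root of $F$; since one of the two walks is confined to the queried component's subtree, the step count is dominated by the output. Let $w$ denote the resulting lower endpoint. For a BCC query I would climb further from $w$ until reaching the first vertex $w^\ast$ with $Q[w^\ast]=\Tvn{false}$ (each such step also lies inside the output) and then launch a DFS from $w^\ast$ that descends only along tree edges and refuses to cross into a child $u$ with $Q[u]=\Tvn{false}$. At every visited vertex $z$ I would iterate over $z$'s choice dictionary and emit $z$ together with the tree edges and upward-nontree edges it registers; at the distinguished $w^\ast$ I would additionally emit its parent, exactly as prescribed in Subsection~\ref{subsec:dep-apl}. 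By that discussion and Lemma~\ref{lem:subtree} the items emitted are precisely those of the queried BCC. A 2ECC query would follow the same template but descend only across non-bridge tree edges (using the precomputed bridge marks) and emit only non-bridge edges, with the special-case handling for bridge tree edges copied verbatim from the 2ECC paragraph of Subsection~\ref{subsec:dep-apl}. Because Lemma~\ref{lem:choice} supports iteration in time linear in the number of marked entries, the work at each visited vertex is proportional to the items actually emitted there, so each query runs in output-proportional time.

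The principal obstacle is exactly this output-sensitivity: a naive traversal that scans $z$'s full adjacency array at each visited vertex $z$ would pay $\Theta(d_z)$ even when only a handful of $z$'s incidences belong to the queried component, and such a cost is not absorbable into the output when, say, the component has few edges but $z$ has high degree. The per-vertex choice dictionaries remove this overhead precisely because they enumerate only the positions that are \emph{a priori} relevant at $z$. A secondary concern is keeping the preliminary ancestor/descendant test and the upward walk from $w$ to $w^\ast$ output-proportional, but both are contained in a single root-to-leaf path inside the queried subtree. The remaining correctness facts---that refusing to cross children $u$ with $Q[u]=\Tvn{false}$ isolates a BCC subtree, and that refusing to cross bridges isolates a 2ECC---are the content of Lemma~\ref{lem:subtree} and the discussion preceding Theorem~\ref{thm:fourdep} and transfer directly to the single-component setting.
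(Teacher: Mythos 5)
Your construction is essentially the paper's: parent pointers via Lemma~\ref{lem:parent}, the array $Q$ of Subsection~\ref{subsec:dep-apl}, per-vertex choice dictionaries, a lock-step climb from $x$ and $y$, and a traversal of the component's subtree that refuses to cross tree edges $\{v,w\}$ with $Q[w]=\Tvn{false}$ (the paper phrases the traversal as starting at the lower endpoint and moving in both directions in $F$ rather than first climbing to $w^\ast$, but the two are equivalent). There is, however, one concrete flaw in the detail you make explicit and the paper leaves vague: you populate the choice dictionary of $v$ with \emph{every} tree edge from $v$ to a child. At query time the traversal must then enumerate all children of each visited vertex $z$ and discard those $u$ with $Q[u]=\Tvn{false}$, and the number of discarded children is not bounded by the size of the output. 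Take a triangle $a,b,c$ rooted at $a$ with $k$ additional pendant leaves attached to $b$: a query for the triangle's biconnected component visits $b$, whose dictionary holds $k+1$ tree children, yet the component has only three vertices and three edges, so the query costs $\Theta(k)$ against an output of size $O(1)$. This defeats exactly the output-sensitivity that you correctly single out as the crux of the theorem.

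The repair is immediate and available to you, since $Q$ (and, for 2-edge-connected-component queries, the bridge marks) are computed before the dictionaries are filled: mark the position of a tree child $w$ only when $Q[w]=\Tvn{true}$ (respectively, only when $\{v,w\}$ is not a bridge), in addition to the nontree edges with lower endpoint~$v$. A child $w$ with $Q[w]=\Tvn{false}$ is never entered by a descent anyway---it can only arise as the starting vertex $w^\ast$ of a query, reached via the upward climb---so nothing is lost, and every dictionary entry enumerated at a visited vertex then corresponds to an emitted item. With that change your argument goes through and coincides with the paper's intent; your space accounting ($n+2L_1(G)$ for the parent pointers, $n$ for $Q$, $n$ for the bridge marks, $2m$ for the dictionaries, each up to the stated lower-order factors) is consistent with the claimed bound.
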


\section{The Density-Independent Case}

\subsection{Depth-First Search}

Some aspects of the following proof are
similar to those of
\cite[Lemma~3.2]{ElmHK15}.

\begin{theorem}
\label{thm:loglog}
A DFS of a directed or undirected graph with $n$
vertices and $m$ edges can be carried out in $O(n+m)$
time with $O(n\log\log(4+{m/n}))$ bits of
working memory.
\end{theorem}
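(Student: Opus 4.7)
My plan is to adapt the turn-value DFS of Theorem~\ref{thm:dfs} by replacing its exact stack of turn values (which costs up to $L_{-1}(G)$ bits) with an approximation that uses only $O(\log\log(d_v+4))$ bits per vertex $v$. The approximations are stored in fixed per-vertex slots realised via Lemma~\ref{lem:static}, which supports constant-time reading and writing. Together with $n$ bits marking the white vertices and $O(\log n)$ bits of bookkeeping this totals $\sum_{v\in V}O(\log\log(d_v+4))+n+O(\log n)$ bits. Since $d\mapsto\log\log(d+4)$ is concave on $[0,\infty)$, applying Jensen's inequality to $\sum_v d_v\le 2m$ yields $\sum_v\log\log(d_v+4)\le n\log\log(2m/n+4)$, which is the $O(n\log\log(4+m/n))$-bit bound claimed in the theorem.

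Concretely, at each vertex $v$ I would partition the turn-value range $\{1,\ldots,d_v-1\}$ into segments of geometrically growing size and store the index of the segment currently containing $v$'s exact turn value. When the DFS retreats from a child $w$ to its parent $v$, the exact turn value is recovered by scanning the segment identified by the stored index and using the known identity of $w$ (obtainable via $\Tvn{mate}$) to pinpoint the current position. The DFS then advances, recursing if the next neighbour is white and skipping otherwise. Designed in the style of \cite[Lemma~3.2]{ElmHK15}, the segment scheme will ensure that each adjacency-array entry is scanned only $O(1)$ times across the whole execution, so the total scanning cost, and hence the entire running time, is $O(n+m)$.

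The hard part will be realising the amortization: a naive scheme that stores, say, $\lfloor\log t_v\rfloor$ and rescans the entire geometric block on each retreat accumulates too much cost at a vertex with many DFS children. The remedy, adapted from \cite[Lemma~3.2]{ElmHK15}, is to refine the approximation on the fly as $v$'s turn value moves forward, splitting or shrinking the active segment so that the amortized scan work per edge stays $O(1)$. The novelty over that earlier construction is that here the per-vertex slot encodes a turn value rather than a progress counter, which reduces its size from $\Theta(\log\log n)$ to $O(\log\log(d_v+4))$ bits and produces the density-sensitive bound asserted in the theorem.
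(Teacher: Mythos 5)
There is a genuine gap here, in two places. First, the recovery step does not work as described. When the DFS retreats from $w$ to $v$, what must be reconstructed is the position $k$ of $v$'s parent $u$ in $v$'s adjacency array (equivalently the exact turn value, since the two determine each other once the position of $w$ is known via the cross links). Your stored segment index only narrows $k$ to a window of candidate positions, and you must then recognize $u$ among the neighbours of $v$ in that window. But the neighbours of $v$ lying strictly between its parent and its gray-path successor are black descendants and gray ancestors of $v$, and a gray ancestor is indistinguishable from the parent by colour alone; ``the known identity of $w$'' gives you one endpoint of the window, not a test for $u$. The paper faces the same identification problem and solves it by rebuilding dropped stack segments bottom-up, locating each vertex's \emph{successor} (not its predecessor) as the first gray neighbour that carries the right segment number (``hue'') and is not already on the rebuilt stack; the hue, an integer bounded by $O(\log(2+m/n))$, is precisely the $O(\log\log(4+m/n))$-bit per-vertex datum that makes this disambiguation possible, and your scheme has no counterpart to it.

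Second, the amortization cannot be rescued within your bit budget. You perform a recovery scan at every retreat to $v$, i.e., once per child of $v$, and the scan length is the size of the segment containing the current turn value. With $O(\log\log(d_v+4))$ bits you can distinguish only $(\log d_v)^{O(1)}$ segments, so for a vertex with $\Theta(d_v)$ tree children the total scan cost is $\Omega(d_v^2/(\log d_v)^{O(1)})$ no matter how the segments are split or shrunk on the fly; the claim that each adjacency-array entry is scanned $O(1)$ times is not achievable by this design. The paper's algorithm avoids per-retreat recovery altogether: it keeps the \emph{exact} turn values of the topmost one or two segments of the conceptual stack on an actual stack $S'$, so that almost every retreat is a constant-time pop, and it pays for scanning only during segment restorations, of which there are at most $O(\log(2+m/n))$ in total because each one is preceded by a segment vanishing for good. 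The per-vertex progress counter it keeps to speed up restoration is measured in groups of $\lceil m/n\rceil$ edges and hence costs only $O(\log(2+d_v n/m))$ bits, which sums to $O(n)$; the space bound is driven by the hues, not by approximate turn values.
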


\begin{proof}
Assume without loss of generality that
$m\ge{n/2}\ge 1$.
We simulate the algorithm of
Theorem~\ref{thm:dfs}, but using
asymptotically less space
(unless $m=O(n)$).
Recall that the algorithm employs a stack $S$
whose size is always bounded by $n r$, where
$r=O(\log(2+{m/n}))$.
When a vertex is discovered by the DFS and
enters $S$, it is permanently assigned
an integer \emph{hue}.
The first vertices to be discovered are given
hue~1, the next ones receive hue~2, etc.,
and the vertices on $S$ with a common hue
are said to form a \emph{segment}.
In general, a new segment is begun whenever the
current segment for the first time occupies
more than $n$ bits on~$S$.
Thus no hue larger than $r$ is ever assigned.

As in \cite{ElmHK15}, the algorithm does not
actually store $S$, which is too large, but only
a part $S'$ of $S$ consisting of the one or two
segments at the top of~$S$.
When a new segment is begun and $S'$ already
contains two segments, the older of these is
first dropped to make room for the new segment.
By construction, $S'$ always occupies $O(n)$ bits.

The algorithm operates as that of Theorem~\ref{thm:dfs},
using $S'$ in place of $S$, except when a pop
causes $S$ but not $S'$ to become empty.
Whenever this happens the top segment of $S$ is
\emph{restored} on $S'$, as explained below,
after which the DFS can resume.
Between two stack restorations a full segment
disappears forever from $S$, so the total number of stack
restorations is bounded by~$r$.

In order to enable efficient stack restoration,
we maintain for each vertex $v$
(a) its color---white, gray or black;
(b) its hue;
(c) whether it is currently on $S'$;
(d) the number of groups of $\Tceil{m/n}$
(out)edges incident on $v$
that have been explored with $v$ as the current vertex.
The number of bits needed
is $O(1)$ for items (a) and (c),
$O(\log(2+r))$ for item (b) and $O(\log(2+{{d n}/m}))$ 
for item (d), where $d$ is the degree of $v$.
Summed over all vertices, this yields a bound
of $O(n\log(2+r))$ bits, as required.
For each segment $J$ on $S$, we also store
on a second stack $S_{\mathrm{t}}$ the last
vertex $u$ of $J$ (the vertex closest to the top of~$S$)
and the number of (out)edges incident on $u$
explored by the DFS with $u$ as the current vertex.
The space occupied by $S_{\mathrm{t}}$ is negligible.

To restore a segment $J$, we push the
bottommost
entry of $J$ on $S'$ and initialize accordingly
the variables kept
outside of $S'$ to interpret entries of $S'$ correctly.
This can be done in constant time by consulting
either the entry on $S_{\mathrm{t}}$ immediately below the top
entry or separately remembered information
concerning the root of the current DFS tree.
We proceed to push on $S'$ the remaining
vertices in $J$ one by one,
stopping when the top entries on $S'$ and
$S_{\mathrm{t}}$ agree,
at which point the restoration of $J$ is complete
and the normal DFS can resume.
Each entry on $S'$ above that of
a vertex $u$ is found by determining the
first gray vertex in $u$'s adjacency array
(counted cyclically from the position of $u$'s parent)
that belongs to~$J$ (as we can tell from its hue)
and is not already on~$S'$.

Because of item (d) of the information kept for
each vertex, the search in the adjacency array
of a vertex of degree $d$ is easily made to spend
$O(\min\{d+1,\Tceil{{m/n}}\})$ time
on entries that were inspected before.
Over all at most $r$
restorations and over all vertices
of degree at most $\sqrt{{m/n}}$, this sums
to $O(r n\sqrt{{m/n}})
=O(m)$.
Since a restoration involves
$O({n/r})$ vertices of degree
larger than $\sqrt{{m/n}}$, the sum over
all restorations and over all such vertices
is $O(r({n/r})({m/n}))=O(m)$.
Altogether, therefore, the algorithm spends
$O(m)$ time on restorations and $O(m)$ time
outside of restorations.
\end{proof}

Our remaining algorithms depend on the following
lemma, which can be seen as a weak dynamic version
of Lemma~\ref{lem:static}.

\begin{lemma}
\label{lem:ragged}
For all $n,N\in\TbbbN$, following an
$O(n)$-time initialization,
an array of $n$ initially empty binary strings
$s_1,\ldots,s_n$ that at all times satisfy
$|s_i|=O(\log n)$ for $i=1,\ldots,n$
and $\sum_{i=1}^n |s_i|\le N$ can be maintained
in $O(n\log\log n+N)$ bits under constant-time reading
and amortized constant-time writing of
individual array entries.
\end{lemma}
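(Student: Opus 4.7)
The plan is to group the strings into blocks and manage each block's storage in its own small dynamic arena within a global heap.  Specifically, I partition $\{1,\ldots,n\}$ into $\lceil n/B\rceil$ blocks of $B=\lceil\log n\rceil$ consecutive indices and, for each block~$j$, keep a header of $O(\log n\log\log n)$ bits comprising (i)~a pointer of $O(\log n)$ bits locating a contiguous \emph{pool} $P_j$ in the heap, (ii)~the current capacity of $P_j$ encoded as one of $O(\log\log n)$ power-of-two options, and (iii)~for each of the $B$ strings in the block, an offset into $P_j$ and a length of $O(\log\log n)$ bits each, which suffice because $|s_i|=O(\log n)$ forces $|P_j|=O(\log^2 n)$.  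The $O(n/\log n)$ headers therefore fit in $O(n\log\log n)$ bits, and a read of $s_i$ reduces to shift-and-mask extraction of $O(\log n)$ bits from $P_j$ in $O(1)$ word operations.

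Each pool $P_j$ is then managed as a bump arena.  A write to $s_i$ that preserves $|s_i|$ overwrites the existing slot in place; a length-changing write appends the new string at the arena's bump pointer, updates the offset and length in the header, and abandons the old slot without reclaiming it.  When the bump pointer would exceed $|P_j|$, or when the live content drops below $|P_j|/4$, I \emph{rebuild} the pool: request a fresh pool from the heap whose capacity is the smallest power of two at least $2L_j$, where $L_j=\sum_{i\in j}|s_i|$ is the current live content of block~$j$, copy the live strings into it, and return the old $P_j$ to the heap.  A rebuild of a pool of capacity $C$ takes $O(C/\log n)$ word operations, and a standard doubling-potential argument charges it to the $\Omega(C/\log n)$ length-changing writes that must occur between consecutive rebuilds of the same pool (each contributing $\Theta(\log n)$ bits to the bump pointer), so writes cost $O(1)$ amortized.

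The global heap is a bit array equipped with $O(\log n)$ segregated free lists, one per power-of-two size class, threaded through the free regions themselves so that they occupy no additional bits.  To bound the heap footprint by $O(N)$ bits I trigger a global compaction whenever the heap bump pointer would reach $4\max(\sum_j L_j,\,c)$ for a suitable constant~$c$: the live pools are repacked contiguously at the start of the heap, the header pointers are updated, and every free list is emptied.  A compaction processes $O(N/\log n+n/\log n)$ words but is preceded by $\Omega(N/\log n)$ length-changing writes since the previous compaction (each of which added $\Theta(\log n)$ bits to the bump pointer), so it too amortizes to $O(1)$ per write.  Initialization installs $O(n/\log n)$ empty headers in $O(n)$ time; the heap starts empty and needs no zeroing.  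The main obstacle to making the argument rigorous will be the accounting for the global heap, since interleaved allocation and deallocation across different power-of-two size classes can strand free regions and force the bump pointer to advance; restricting pool capacities to powers of two and invoking the periodic global compaction just described is the device that keeps fragmentation within the $O(N+n\log\log n)$-bit budget.
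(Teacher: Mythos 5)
Your architecture is essentially the paper's: partition the indices into small groups, keep $O(\log\log n)$-bit per-string metadata (a length plus a position inside the group's storage), hold each group's variable-size payload in a power-of-two-sized region of a global heap that grows by doubling, and control heap fragmentation by periodic compaction paid for by amortization. The one structural difference is that the paper sorts each group's strings into \emph{piles by length}, so an update is a constant-time move of one string between two piles (the hole being filled by the pile's top element) and no dead slots ever arise inside a group's storage; your bump arena instead accumulates dead slots and must be rebuilt, and that rebuild is where your argument has a genuine hole.

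Setting the new capacity to the smallest power of two at least $2L_j$ only guarantees $C/4<L_j\le C/2$. If $L_j$ lands just above $C/4$, a single write that shortens one string by a couple of bits drives the live content below $C/4$ and triggers the next rebuild immediately, so the claim that $\Omega(C/\log n)$ length-changing writes separate consecutive rebuilds of a pool is false as stated. (The paper's resizing rule puts the content at \emph{half} its new container after every migration, which is exactly what makes the quarter-threshold amortization go through; you need that invariant, or a shrink threshold of $C/8$, or an explicit argument that the resulting cascade of shrink-rebuilds is geometric and chargeable to the preceding grow.) The parenthetical ``each contributing $\Theta(\log n)$ bits'' is wrong in both places it appears---a length-changing write can move a bump pointer by one bit---and for the global compaction it is invoked at the wrong level anyway, since writes never advance the heap's bump pointer; only pool reallocations do, so the $\Omega(N/\log n)$-writes-per-compaction claim must be routed through the (broken) per-pool amortization. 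Finally, a rebuild must rewrite the $\Theta(\log n)$ offsets in the block header, which costs $\Theta(\log\log n)$ words and is not covered by $O(C/\log n)$ when $C$ is a constant number of words; a minimum pool capacity of $\Omega(\log n\log\log n)$ bits (affordable, adding only $O(n\log\log n)$ bits over all blocks) would repair this. None of these defects is unfixable, but as written the amortized $O(1)$ write bound is not established, and you have yourself flagged the heap accounting as open.
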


\begin{proof}
Compute a positive integer $h$
with $h=\Theta((\log n)^2)$ and partition the strings
into $O({n/h})$ \emph{groups} of
$h$ strings each, except that the last
group may be smaller.
For each group, store the strings
in the group in $O(\log n)$ \emph{piles}, 
each of which holds all strings of one
particular length in no particular order.
For each string, we store its length
and its position within the corresponding pile.
Conversely, the entry for a string on a pile, besides
the string itself, stores the number of the
string within its group.
The size of the bookkeeping information amounts
to $O(\log\log n)$ bits per string and
$O(n\log\log n)$ bits altogether.

When a string changes, the string may need
to move from one pile to another within its group.
This usually leaves a ``hole'' in one pile, which is
immediately filled by the entry that used to be
on top of the pile.
This can be done in constant time, which also covers
the necessary update of bookkeeping information.

We are now left with the problem of representing
$O(({n/h})\log n)=O({n/{\log n}})$ piles.
Divide memory into \emph{words} of $\Theta(\log n)$
bits, each of which is large enough to hold one of the
strings $s_1,\ldots,s_n$.
Rounding upwards, assume that each pile at all
times occupies an integer number of words---this
wastes $O(n)$ bits.
The update of a string may cause the sizes of up to
two piles to increase or decrease by one word,
but
no operation changes
the size of a pile by more than one word.
Each pile is stored in a \emph{container},
of which it occupies at least a quarter.
For each pile we maintain its size, the size of
its container, and the location in memory of its
container, a total of $O(\log n)$ bits per pile
and $O(n)$ bits altogether.
We also maintain in a \emph{free pointer} the
address of the first memory word after the last container.

When a pile outgrows its container, a new container,
twice as large, is first allocated for it starting
at the address in the free pointer,
which is incremented correspondingly.
The pile is moved to its new container, after
which its old container is considered \emph{dead}.
Conversely, when a pile would occupy less than a quarter
of its container after losing a string, the pile
is first moved to a new container
of size twice the size of the pile
after the operation and
also allocated from the address
in the free pointer.
If every operation that operates on a pile
pays 5 coins to the pile, by the time
when the pile needs to migrate to a new
container, it will have accumulated enough
coins to place a coin on every position in
the old container and a coin on every element
of the pile.
In terms of an amortized time bound,
the latter coins can pay for the migration
of the pile to its new container.

When an operation would cause
the size of the dead containers to exceed that
of the live containers
(the containers that are currently in use) plus~$n$ bits, we
carry out a ``garbage collection'' that eliminates
the dead containers and re-allocates the piles in
tightly packed new live containers
in the beginning of
the available memory, where each new container
is made twice as large as the pile that it contains,
and the free pointer is reset accordingly.
The garbage collection can be paid for by the
coins left on dead containers.

A string can be read in constant time, and updating
it with a new value takes constant amortized time,
as argued above.
Because every pile occupies at least a quarter of
its (live) container and the dead containers are never
allowed to occupy more space than the live containers,
plus $n$ bits,
the number of bits occupied by the array of strings
at all times is $O(n+N)$.
\end{proof}

\begin{corollary}
\label{cor:ragged}
For all $n,N\in\TbbbN$, following an
$O(n)$-time initialization,
an array of $n$ initially empty binary strings
$s_1,\ldots,s_n$ that at all times satisfy
$|s_i|=O({{\log n}/{\log\log n}})$ for $i=1,\ldots,n$
and $\sum_{i=1}^n |s_i|\le N$ can be maintained
in $O(n+N)$ bits under constant-time reading
and amortized constant-time writing of
individual array entries.
\end{corollary}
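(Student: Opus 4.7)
My plan is to reduce the corollary to Lemma~\ref{lem:ragged} by bundling $k=\Theta(\log\log n)$ consecutive input strings into one super-string; with only $O(n/\log\log n)$ super-strings to manage, the $\Theta(\log\log n)$ per-item overhead that Lemma~\ref{lem:ragged} charges will amount to only $O(n)$ bits overall.

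For each block $g$ of $k=\lceil\log\log n\rceil$ consecutive indices I would maintain a super-string $S_g$ equal to the self-delimiting concatenation $1^{\ell_1}0\,1^{\ell_2}0\,\cdots\,1^{\ell_k}0\,s_{g,1}s_{g,2}\cdots s_{g,k}$, where $\ell_j=|s_{g,j}|$.  Under the hypothesis $|s_i|=O(\log n/\log\log n)$ and with this choice of $k$, each super-string has length at most $2kL+k=O(\log n)$, meeting the length restriction of Lemma~\ref{lem:ragged}, and the sum of super-string lengths is at most $N'=2N+n$.  Applying Lemma~\ref{lem:ragged} with $n'=\lceil n/k\rceil$ and $N'$ then yields $O(n'\log\log n'+N')=O(n+N)$ bits after an $O(n')=O(n)$-time initialization.

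To read $s_i$, I would compute $g$ and $j=((i-1)\bmod k)+1$, fetch $S_g$ in $O(1)$ time via Lemma~\ref{lem:ragged}, and extract the $j$th content substring by $O(1)$ word-level operations: a standard precomputed table of $n^{o(1)}$ bits supports select on individual words, which is enough to jump to the $(j{-}1)$th and $j$th \texttt{0} in the unary header (recovering $\ell_j$ and the offset of $s_{g,j}$ in the content section) and then to carve the right bit block out of $S_g$.  A write of $s_i$ would read $S_g$, reassemble the new $S_g$ with the unary marker of $\ell_j$ and the content of $s_{g,j}$ replaced by the corresponding data for the new value in a few word-level operations, and hand the result back to Lemma~\ref{lem:ragged}, whose write costs amortized constant time.

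The point that deserves the most attention is that every in-super-string manipulation really does run in $O(1)$ time.  This relies on two things: the length bound forces each $S_g$ into a constant number of $\Theta(\log n)$-bit machine words, so the bit-manipulation primitives of the word~RAM apply directly; and word-level rank/select on small bit strings, already invoked elsewhere in the paper, handles the navigation through the unary header.  Given these, the bounds stated by the corollary follow immediately from Lemma~\ref{lem:ragged}.
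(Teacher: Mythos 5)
Your proposal is correct and follows essentially the same route as the paper: both bundle $\Theta(\log\log n)$ consecutive strings into a self-delimiting $O(\log n)$-bit unit (the paper's ``blob'' with label $\texttt{1}^{|s_i|}\texttt{0}s_i\cdots$), hand the resulting $O(n/\log\log n)$ units to Lemma~\ref{lem:ragged} so its per-item $O(\log\log n)$-bit overhead totals $O(n)$, and extract or rewrite an individual $s_i$ inside its unit by constant-time word operations supported by a precomputed table. The only cosmetic differences are that you place all unary length headers before the payloads rather than interleaving them, and your $n^{o(1)}$ estimate for the lookup table is slightly optimistic (the paper simply uses $O(n)$ bits, which is all that is needed).
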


\begin{proof}
Maintain groups of
$\Theta(\log\log n)$ strings with the
data structure of the previous lemma.
In more detail,
compute a positive integer $q$ with
$q=\Theta(\log\log n)$ and partition the $n$ strings
$s_1,\ldots,s_n$ into $\Tceil{{n/q}}$ \emph{blobs}
of $q$ consecutive strings each, except that the
last blob may be smaller.
If a blobs consists of the strings $s_i,\ldots,s_j$,
let its \emph{label} be the binary string
$\texttt{1}^{|s_i|}\texttt{0} s_i\cdots
\texttt{1}^{|s_j|}\texttt{0} s_j$.
Because the label of a blob is of $O(\log n)$ bits,
given the label and the number of a string $s$
within the blob, we can extract $s$ in constant
time by lookup in a table of $O(n)$ bits that can
be computed in $O(n)$ time.
Similarly, given a new value for $s$, we can update
$s$ within the blob in constant time.
Maintain the sequence of $\Tceil{{n/q}}$ blobs of
$O(\log n)$ bits each with the data structure of
Lemma~\ref{lem:ragged}.
The number of bits needed is
$O(({n/q})\log\log n+N)=O(n+N)$,
and the operation times are as claimed.
\end{proof}

\begin{theorem}
\label{thm:logstar}
A DFS of a directed or undirected graph with $n$
vertices and $m$ edges can be carried out in
$O(n+m\Tlogstar n)$ time with $O(n)$ bits of
working memory.
\end{theorem}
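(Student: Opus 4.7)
The plan is to extend the segmentation scheme of Theorem~\ref{thm:loglog} by iterating it $\Tlogstar n$ times, driving the per-vertex bookkeeping down from $O(\log\log(4+{m/n}))$ bits to $O(1)$ amortized bits. Recall that in Theorem~\ref{thm:loglog} the stack $S$ was partitioned into at most $r=O(\log(2+{m/n}))$ \emph{segments} (hues), only the top one or two segments were kept explicitly in $S'$, each vertex recorded its own hue using $O(\log r)$ bits, and dropped segments were reconstructed on demand at total cost $O(m)$.

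First I would build a hierarchy of $k=\Tlogstar n$ levels of segmentation. Level~$0$ is as in Theorem~\ref{thm:loglog}: a new segment is opened whenever the current one exceeds $n$ bits on $S$. For $i\ge 1$, a level-$i$ segment is a maximal run of level-$(i-1)$ segments, and a new level-$i$ segment is opened once a prescribed budget of them has been reached; the budgets are calibrated so that the number of segments at level~$i$ is $O(\log^{(i+1)}\! n)$, hence a level-$i$ hue fits in $O(\log^{(i+2)}\! n)$ bits. At each level only the topmost segment is kept in detail; older segments at that level contribute only to a small auxiliary stack, analogous to $S_{\mathrm{t}}$ in the earlier proof, that records for each dropped segment its last vertex together with the local state needed to resume the DFS from there.

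For each vertex~$v$ I would store only its \emph{currently-finest-known} hue, a variable-length binary string whose length decreases in stages as $v$ sinks below the top of the stack and successive coarser-level boundaries pass over it. These strings are kept in the data structure of Corollary~\ref{cor:ragged}, which uses $O(n+N)$ bits where $N=\sum_v |s_v|$. A geometric summation across the $k$ levels, using that at each level a given vertex lies in at most one top-segment, yields $N=O(n)$, for a grand total of $O(n)$ bits including the live portion of $S$, the color array of Theorem~\ref{thm:dfs} and the per-level auxiliary stacks.

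For the running time I would re-run the accounting of Theorem~\ref{thm:loglog} level by level: a pop that empties the live portion at level~$i$ triggers a level-$i$ restoration, whose work is amortized against the (out)edges it explores and contributes $O(m)$ per level, for $O(m\Tlogstar n)$ overall; everything else stays linear. The main obstacle will be calibrating the per-level budgets and coarsening rules simultaneously, so that (i) the hue of any vertex in the current chain of top-segments can be decoded unambiguously in constant time during a restoration, (ii) the amortized cost of truncating a vertex's hue string when a boundary passes over it remains $O(1)$ per event under Corollary~\ref{cor:ragged}, and (iii) the level-$i$ restoration cost is genuinely $O(m)$ independent of~$i$, so that the final time bound loses only the advertised factor of $\Tlogstar n$ and no more. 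The variant giving $O(n\log^{(k)}\! n)$ bits in $O(n+m)$ time then follows by stopping the recursion at level~$k$ instead of at level~$\Tlogstar n$.
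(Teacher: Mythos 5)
Your plan follows the same route as the paper's proof: a $\Tlogstar n$-level hierarchy of groupings of segments (the paper's \emph{stripes}, with ranks $1,\ldots,t$ whose sizes are governed by a sequence $p_1=\Theta(\sqrt{\log n})>\cdots>p_t=1$ of powers of two satisfying $p_i\ge\log p_{i-1}$), a per-vertex position record that is coarsened as the vertex sinks deeper into the stack, storage of these variable-length records via Corollary~\ref{cor:ragged}, and restoration by recursively splitting the topmost coarse unit into units of the next finer rank. Your space accounting also has the right shape; the paper makes the ``geometric summation'' precise by noting that the $O({n/{p_i^2}})$ vertices lying in stripes of rank $i-1$ each carry an index of at most $2\log p_{i-1}\le 2 p_i$ bits, for $O({n/{p_i}})$ bits per rank and $O(n)$ in total because the $p_i$ are distinct powers of two.

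The genuine gap is exactly your item (iii), and it is not a routine calibration. In Theorem~\ref{thm:loglog} the total restoration cost is $O(m)$ because between two restorations a full segment disappears from $S$ forever, so each segment is rebuilt at most once; that argument does not transfer to the hierarchy, where a rank-$i$ unit can be split, re-formed by a join, and split again arbitrarily often as the DFS stack oscillates around a boundary. The paper closes this with two ingredients you do not supply. First, coarsening is an explicit, costed operation (a \emph{join}) triggered by a counting threshold---when $2({{p_{i-1}}/{p_i}})^2$ stripes of rank $i-1$ have accumulated, the bottommost $({{p_{i-1}}/{p_i}})^2$ of them are merged---and the invariant that ranks are nonincreasing from the bottom to the top of $S$ is what keeps both the decoding and the space bound valid. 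Second, a potential argument: fixing $i$ and setting $\Phi=|N-N_0|$, where $N$ is the current number of vertices in stripes of rank at most $i-1$ and $N_0$ is the size of one rank-$i$ stripe, each split or join decreases $\Phi$ by $N_0$ while each push or pop increases it by at most $1$; hence the total vertex volume touched by all rank-$i$ splits and joins is $O(n)$, which the per-vertex counter of explored groups of $\Tceil{{m/n}}$ incident edges converts into $O(m)$ time per rank. Without this amortization your ``$O(m)$ per level'' is an assertion, not a bound. (A smaller point: the $O(n\log^{(k)}\! n)$-bit variant is obtained by capping the ranks at $k$, i.e., suppressing joins that would create stripes of rank $k+1$, which matches your closing remark.)
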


\begin{proof}
Assume without loss of generality that
$m\ge{n/2}\ge 1$.
Compute a positive integer $t$ and a sequence
$p_1,\ldots,p_t$ of $t$ powers of~2 with
the following properties:

\begin{itemize}
\item[$\bullet$]
$1=p_t<p_{t-1}<\cdots<p_2<p_1=\Theta(\sqrt{\log n})$.
\item[$\bullet$]
For $i=2,\ldots,t$, $p_i\ge\log p_{i-1}$.
\item[$\bullet$]
$t=O(\Tlogstar n)$.
\end{itemize}

This is easy:
Begin by computing $p_1$ as a power of~2 with
$p_1=\Theta(\sqrt{\log n})$, at least~1, take
$t=1$ and then, as long as $p_t>1$,
increment $t$ and let $p_t$ be the smallest
power of~2 no smaller than $\log p_{t-1}$,
i.e., $p_t=2^{\Tceil{\log\log p_{t-1}}}$.

As in the algorithm of
Theorem~\ref{thm:loglog}, we operate with a
conceptual stack $S$ and an actual stack $S'$
that contains the topmost one or two segments on~$S$.
By definition, a complete segment now contains exactly
$\Tceil{{n/{p_1^2}}}$ vertices.
The complete segments in turn are partitioned into
\emph{stripes}, each of which has a \emph{rank}
drawn from $\{1,\ldots,t\}$.
For $i=1,\ldots,t$, a stripe of rank $i$
comprises exactly $({{p_1}/{p_i}})^2$ segments
and therefore at least ${n/{p_i^2}}$ vertices.
A stripe of rank~1 can be identified with
the single segment that it contains.
From the bottom to the top of $S$, the stripes
occur in an order of nonincreasing rank
(informally, larger stripes are deeper
in the stack).
Taken in the same order, the stripes are also assigned the
indices $0,1,2,\ldots,$ and each vertex is marked
with the (current) index of its stripe.
For $i=2,\ldots,t$, since the number of stripes
of rank at least $i-1$ is bounded by
$p_{i-1}^2$, the index of every such stripe is
an integer of at most
$2\log p_{i-1}\le 2 p_i$ bits.

An additional stack
records
the ranks of all stripes in the order in
which the stripes occur on~$S$.
Using this stack and $S\Tsub t$ to start
at the bottom entry
of the topmost stripe, we can step
through the vertices of that stripe in the
order in which they occur on~$S$.
If the stripe is of rank~1, in particular, this
enables us to carry out a stack restoration.
If the topmost stripe is of rank $i>1$, we can carry
out the stack restoration by \emph{splitting}
the stripe into $({{p_{i-1}}/{p_i}})^2$
stripes of rank $i-1$ and proceeding recursively.
Conversely, if at some point there are
$2({{p_{i-1}}/{p_i}})^2$ stripes of rank $i-1$
for $i=2$, we \emph{join} the bottommost
$({{p_{i-1}}/{p_i}})^2$ of these
into a stripe of rank~$i$ and continue the
conditional join recursively for $i=3,\ldots,t$.
In order to know when to stop the join,
maintain for $i=1,\ldots,t$ the number of stripes
of rank~$i$.

For $i=2,\ldots,t$, there are never more than
$2({{p_{i-1}}/{p_i}})^2$ stripes of rank $i-1$,
and the number of vertices contained in
stripes of rank $i-1$ is
$O(({n/{p_{i-1}^2}})({{p_{i-1}}/{p_i}})^2)
=O({n/{p_i^2}})$.
As noted above, every index of such a stripe
is of $O(p_i)$ bits, 
so the total number of bits consumed by indices
of stripes of rank $i-1$ is $O({n/{p_i}})$.
Summed over all values of $i$, this yields
$O(n)$ bits occupied by stripe indices.
Since the stripe indices are of
$O(\log\log n)$ bits by the upper
bound on $p_1$, they can be maintained
with the data structure of
Corollary~\ref{cor:ragged}
at a cost of constant time per operation
and a total of $O(n)$ bits.
The other data structures used by
the algorithm are easily seen to
fit in $O(n)$ bits as well.

Fix $i\in\{2,\ldots,t\}$, let $N_0$ be the number
of vertices in a single stripe of rank~$i$,
let $N$ be the current number of vertices
in stripes of rank at most $i-1$
and define $\Phi$ as $|N-N_0|$.
The split of a stripe of rank $i$ into stripes
of rank $i-1$ reduces $\Phi$ from $N_0$ to~0,
the join of stripes of rank $i-1$ into
a stripe of rank~$i$ reduces $\Phi$
by exactly~$N_0$, and a push or pop on $S$
increases $\Phi$ by at most~1.
A potential argument now shows the sum over all
splits of stripes of rank $i$ or joins of stripes
to stripes of rank $i$ of the number of vertices
contained in the stripes concerned
and the stripes above them to be $O(n)$.
Summing over all values of~$i$, this yields a bound
of $O(t n)$.
If we again store with each vertex the number of
completely explored groups of $\Tceil{{m/n}}$ incident
(out)edges, the time needed for all splits and
joins is $O(t n({m/n}))=(m\Tlogstar n)$,
and all other parts of the algorithm
work in $O(m)$ time.
\end{proof}

By using only the ranks $1,\ldots,k$, for some
$k\in\TbbbN$, i.e., by omitting all joins
that would create stripes of rank $k+1$ or more,
we can lower the time bound
of the previous algorithm to $O(n+k m)$,
but at the price of having to store
indices of stripes of rank $k$ for
almost all vertices.

\begin{theorem}
\label{thm:dfs-k}
For every constant $k\in\TbbbN$,
a DFS of a directed or undirected graph with $n$
vertices and $m$ edges can be carried out in
$O(n+m)$ time with $O(n\log^{(k)}\! n)$ bits of
working memory.
\end{theorem}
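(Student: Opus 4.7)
The plan is to adapt the algorithm of Theorem~\ref{thm:logstar} by truncating its stripe hierarchy at rank~$k$, as the author already hints in the paragraph preceding the theorem: every join that would create a stripe of rank exceeding~$k$ is simply omitted. I would set up a descending sequence $p_1>p_2>\cdots>p_k\ge 1$ of powers of~$2$ satisfying $p_{i+1}\ge\log p_i$, but starting from a larger value than in Theorem~\ref{thm:logstar} (for instance $p_1=\Theta(\sqrt n)$) so that the iterated-logarithm chain produces $p_k=\Theta(\log^{(k-1)}\!n)$ and hence $\log p_k=O(\log^{(k)}\!n)$. Stripes of ranks $1,\ldots,k-1$ remain bounded in number by the conditional-join mechanism of Theorem~\ref{thm:logstar}; rank-$k$ stripes are no longer bounded that way, but the total stack size admits at most $p_k^2$ of them, so every rank-$k$ stripe index still fits in $O(\log p_k)=O(\log^{(k)}\!n)$ bits.

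The time analysis reuses the potential-function argument of Theorem~\ref{thm:logstar}. At each level $i\in\{2,\ldots,k\}$ the potential caps the total vertex-work spent on splits and joins by $O(n)$ amortized; combined with the progress-counter trick (storing per vertex the number of completely explored groups of $\Tceil{m/n}$ incident edges), each level runs in $O(m)$ time, so all $k$ levels together contribute $O(km)$. For constant~$k$, restoration work is $O(m)$ and the remaining DFS runs in $O(n+m)$ time.

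The space budget decomposes into three pieces: per-vertex bookkeeping (color, hue, membership flag in $S'$, progress counter) fits in $O(n)$ bits since $r=O(k)=O(1)$; indices of ranks~$1,\ldots,k-1$ contribute $\sum_{i=2}^{k}O(n/p_i)=O(n/p_k)=o(n)$ bits by the same calculation as in Theorem~\ref{thm:logstar}; and rank-$k$ indices, attached to as many as $n$ vertices and using $O(\log^{(k)}\!n)$ bits each, contribute $O(n\log^{(k)}\!n)$ bits, which dominates. For $k\ge 2$ the variable-length rank-$k$ indices are stored via Corollary~\ref{cor:ragged}, whose constraint $|s_i|=O({{\log n}/{\log\log n}})$ is met by every rank from~$2$ upward; the few rank-$1$ indices (arising for only $O(n/\log^2 n)$ vertices) can be stored separately in $O(n)$ extra bits. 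The case $k=1$ is trivial, since $O(n\log n)$ bits accommodate an ordinary DFS stack directly.

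The main obstacle I anticipate is aligning the parameter choices so that the top-rank index length shrinks exactly to $O(\log^{(k)}\!n)$ while every lower-rank index still fits in compact variable-length storage compatible with Corollary~\ref{cor:ragged}, and re-verifying that the potential-function argument goes through when the chain of conditional joins is cut off at rank~$k$. Once these details are settled, summing the contributions above yields the stated $O(n+m)$-time, $O(n\log^{(k)}\!n)$-bit bound.
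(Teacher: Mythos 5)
Your core plan---run the algorithm of Theorem~\ref{thm:logstar} but omit every join that would create a stripe of rank exceeding a fixed constant---is exactly what the paper does; its entire proof is the remark preceding the theorem, which observes that the truncated algorithm runs in $O(n+km)$ time at the price of storing rank-$k$ stripe indices for almost all vertices. The potential argument, the progress counters and the use of Corollary~\ref{cor:ragged} all carry over as you say. However, there is no off-by-one to repair, and your repair is where the proposal goes wrong. With the paper's own choice $p_1=\Theta(\sqrt{\log n})$ one gets $p_i=\Theta(\log^{(i)}\!n)$, so a rank-$k$ stripe index occupies $2\log p_k=O(\log^{(k+1)}\!n)$ bits and the total space is $O(n\log^{(k+1)}\!n)$, which is already contained in $O(n\log^{(k)}\!n)$; alternatively, truncate at rank $k-1$ and treat $k=1$ separately, as you do. Since the theorem quantifies over all constant $k$, nothing needs to be ``aligned exactly.''

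The re-parameterization $p_1=\Theta(\sqrt{n})$ is not merely unnecessary but breaks the construction, because $p_1$ also fixes the segment size $\Tceil{n/p_1^2}$. With your choice a complete segment contains $O(1)$ vertices, so the conceptual stack $S$ can simultaneously contain $\Theta(n)$ segments. The auxiliary stack $S_{\mathrm{t}}$ inherited from Theorem~\ref{thm:loglog}, which stores for each segment on $S$ its topmost vertex together with an exploration count, then occupies $\Theta(n\log n)$ bits by itself, destroying the space bound; the actual stack $S'$ holds only the topmost one or two segments, i.e., $O(1)$ vertices, so the amortization of restorations against segment sizes must be redone from scratch; and, as you note, the rank-$1$ stripe indices grow to $\Theta(\log n)$ bits and fall outside the regime of Corollary~\ref{cor:ragged}. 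Your patches address only the last of these symptoms. Reverting to $p_1=\Theta(\sqrt{\log n})$ and simply shifting the truncation rank removes the entire difficulty and yields the theorem as stated.
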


\subsection{Biconnected and 2-Edge-Connected Components}

This subsection describes algorithms for the problems
considered in Theorem~\ref{thm:fourdep}, but with
the resource bounds of Theorems \ref{thm:logstar}
and~\ref{thm:dfs-k}.
The approach uses elements of an algorithm of
Kammer et al.~\cite{KamKL16} for computing
cut vertices.

The main differences to the density-dependent setting
of Theorem~\ref{thm:fourdep}
can be explained in terms on a nontree edge
$\{x,y\}$, where $y$ is a descendant of~$x$.
In the density-dependent case,
because a DFS forest is computed in a first pass
and remembered, in a second pass $\{x,y\}$ can
be processed when a preorder traversal
of the forest reaches~$x$.
This makes the processing of $\{x,y\}$,
which is essentially the marking of a number of
vertices, efficient, because the marking can stop
as soon as it reaches a vertex that is already marked.

In the density-independent setting there is only
a single pass, and $\{x,y\}$ must be processed when
it is explored by the DFS in that pass, at which
time the current vertex of the DFS is~$y$.
The vertices to be marked all lie on the gray path
between $x$ and $y$, but they may be few and
distributed irregularly, and we cannot afford
to follow the gray path backwards all the way from $y$
to $x$ in order to find them.
To solve this problem we could keep
on a separate stack $S\Tsub u$ only those gray vertices
that have not yet been marked and let the
processing of $\{x,y\}$ pop and mark vertices
from $S\Tsub u$ until, loosely speaking,
$x$ is reached.

The approach outlined in the previous paragraph
would be correct, but the stack $S\Tsub u$ of
unmarked gray vertices would take up too much space.
Indeed, just as we can keep only one or two segments,
the \emph{surface segments},
of the virtual stack $S$ on an actual stack $S'$.
we can keep only the part of $S\Tsub u$ that
contains vertices in the surface segments on
an actual stack that, for convenience, we
continue to call $S\Tsub u$.
Because of this, we cannot mark vertices in
the other segments, the \emph{buried segments}.
The marking must still be carried out, but it can be postponed
until the segments in question once again become
surface segments.
In order to realize this, we use a special
``propagating mark'' that marks a vertex but also
calls for the marking to be extended towards
the top of the stack once the propagating mark
is no longer buried
(other, normal, marks may have become buried and
should not be propagated in the same way).

The ``scope'' of a propagating mark could extend
to the vertex $y$ that was the current vertex
of the DFS when the propagating mark was placed.
Since this is not easy to handle, we instead
stipulate that the ``scope'' of a propagating
mark ends at the end of its stripe.
As a consequence, the processing of a nontree
edge $\{x,y\}$ may require several buried
stripes to receive propagating marks.
For efficiency reasons we must prevent
stripes for which this already happened
from being processed again, which we can do
by placing buried stripes, represented
by their bottommost vertices, on $S\Tsub u$,
processing stripes only when they
are on $S\Tsub u$, and removing them from
there when this happens.
Since the number of stripes is small,
storing distinct stripes on $S\Tsub u$ does not
violate the space bound.

The remaining details are given in the following proof.
In particular, it is shown how one can know when
to stop when popping from~$S\Tsub u$.

\begin{theorem}
Given an undirected
graph $G=(V,E)$ with $n$ vertices and $m$ edges,
in $O(n+m\Tlogstar n)$ time and with
$O(n)$ bits of working memory, we can
compute the biconnected components and the
2-edge-connected components with their
vertices and/or their edges,
the cut vertices and the bridges of~$G$.
For every constant $k\in\TbbbN$, the same problems
can also be solved in $O(n+m)$ time
with $O(n\log^{(k)}\! n)$ bits of
working memory.
\end{theorem}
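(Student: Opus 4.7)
The plan is to adapt the marking procedure of Section~\ref{subsec:dep-apl} to the density-independent setting supplied by Theorems~\ref{thm:logstar} and~\ref{thm:dfs-k}. Recall that in the proof of Theorem~\ref{thm:fourdep} a DFS forest $F$ is stored explicitly and then, in a second pass over~$F$, the array $Q$ with $Q[w]=P(w)$ is computed by processing each nontree edge $\{x,y\}$, $y$ a descendant of~$x$, at the moment the preorder reaches~$x$. Here $F$ is far too large to store, so nontree edges must be processed while the DFS itself is running; at the moment $\{x,y\}$ is explored, $y$ is the current vertex and the vertices to be marked form an interval of the current gray path sitting immediately below~$x$.

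To locate them quickly I would maintain, alongside the virtual stack $S$ and its surface part $S'$ from the proof of Theorem~\ref{thm:logstar}, an auxiliary stack $S_{\mathrm{u}}$ containing exactly those gray vertices currently in the surface segments that have not yet been marked. On a nontree edge $\{x,y\}$ we pop and mark vertices from $S_{\mathrm{u}}$ until we have marked every ancestor of $y$ strictly below~$x$ that still resides in the surface portion. Since stripe indices and enough local bookkeeping are already attached to gray vertices by the underlying DFS, the termination test costs $O(1)$ per popped vertex, and each vertex is popped from $S_{\mathrm{u}}$ at most once during the surface-lifetime of its stripe.

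The real difficulty is that the ancestors to be marked may live in \emph{buried} stripes that are absent from $S'$. For each buried stripe $J$ that the scope of $\{x,y\}$ crosses, I would install a single \emph{propagating mark} on the stored bottommost vertex of~$J$ (which already lives on~$S_{\mathrm{t}}$), with the convention that when $J$ resurfaces the marking is resumed upwards from the anchor and truncated at the top of~$J$ rather than at~$y$. Coarsening the scope to a stripe boundary is harmless because every vertex on the gray path from $x$ to~$y$ is still marked eventually, while spurious marks above the true scope do not affect~$P$. To avoid servicing the same stripe twice, its anchor is pushed onto $S_{\mathrm{u}}$ when the propagating mark is installed and removed upon resurfacing; the number of stripes alive at any time is polylogarithmic in~$n$, so the extra entries fit within the $O(n)$-bit budget using Corollary~\ref{cor:ragged}.

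Once $Q$ is available, the four outputs are produced exactly as in Section~\ref{subsec:dep-apl}: in a second DFS pass, driven by the same algorithm of Theorem~\ref{thm:logstar} or~\ref{thm:dfs-k}, the child-visitation order required for biconnected and 2-edge-connected component output is enforced by consulting $Q[w]$ before recursing into a candidate child~$w$, while cut vertices and bridges are emitted upon retreat. The principal obstacle, and the reason for the stripe-based bookkeeping above, is the running time. A potential argument analogous to the one in the proof of Theorem~\ref{thm:logstar}, charging each split or join of the stripe hierarchy for the propagation work it triggers, should bound the total extra time by $O(tn)$, which is $O(n\Tlogstar n)$ in the setting of Theorem~\ref{thm:logstar} and $O(kn)=O(n)$ in that of Theorem~\ref{thm:dfs-k}. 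All remaining work is $O(1)$ amortized per edge, the new data structures fit in $O(n)$ or $O(n\log^{(k)}\! n)$ bits, and running two DFS passes merely doubles the underlying time, still within the claimed bounds.
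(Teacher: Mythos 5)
Your proposal is essentially the paper's own argument: a single-pass marking scheme with an auxiliary stack of still-unmarked gray vertices, buried stripes represented on that stack by their bottommost vertices, propagating marks whose scope is truncated at stripe boundaries, and an amortized charge against the stripe split/join machinery, followed by producing the four outputs from the resulting table of $P$ as in Subsection~\ref{subsec:dep-apl}. One small correction: spurious marks \emph{would} corrupt $P$ (they would hide cut vertices and bridges), but none actually arise here, since every buried stripe lies entirely below the surface segments and hence below the grandparent of the current vertex, so truncating a propagating mark at its stripe boundary never over-marks.
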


\begin{proof}
Let us change the DFS algorithm of
Theorem~\ref{thm:logstar}
or Theorem~\ref{thm:dfs-k} to make it carry out
a stack restoration not when the actual
stack $S'$ is empty, but already when it contains
only two vertices
(without loss of generality, full segments
comprise at least three vertices).
Let us call this modified stack restoration
\emph{eager restoration}.
We show how to augment the algorithm with
steps that maintain a
Boolean array $R[1\Ttwodots n]$
such that whenever the gray path of the DFS
contains three vertices $u$, $v$ and $w$
in that order, $R[u]=\Tvn{true}$ exactly
if the part of the graph explored so far
contains a cycle through the two
edges $\{u,v\}$ and $\{v,w\}$.
With $P(w)$ defined as in
Subsection~\ref{subsec:dep-apl}, it is
clear that $P(w)$ can be read off $R[u]$
when the search is about to withdraw
over $\{v,w\}$.
This enables us to compute a table of $P$,
and the arguments given in
Subsection~\ref{subsec:dep-apl} show
how to solve the problems indicated in
the theorem using only $O(n+m)$ additional
time and $O(n)$ additional bits.

$R$ is in fact a virtual array implemented
via an actual array $\overline{R}[1\Ttwodots n]$,
each of whose entries takes values in
$\{\Tvn{false},\Tvn{true},\Tvn{propagating-true}\}$.
The connection is as follows:
For $u\in V$, $R[u]=\Tvn{true}$ if and only if
$\overline{R}[u]\in\{\Tvn{true},\Tvn{propagating-true}\}$
or $\overline{R}[u']=\Tvn{propagating-true}$ for some
$u'\in V$ that belongs to the same stripe as $u$
and precedes $u$ within that stripe
(thus $u'$ was pushed before~$u$).
Moreover, we stipulate that
$R[u]=\overline{R}[u]$ for all $u$
currently stored on $S'$ (i.e., belonging
to a surface segment), which implies that we can
determine $R[u]$ in constant time whenever we need it.

For each $u\in V$ currently on $S'$, we store with
$u$ its position on $S'$.
We also use an additional stack $S_{\mathrm{u}}$ that contains
some of the vertices on $S$ in the same order
as on $S$.
If $u$ is stored on $S'$, it is also present on $S_{\mathrm{u}}$
exactly if it is followed on the gray path by
at least two vertices and
$R[u]=\Tvn{false}$---informally, if $u$'s gray
child has not yet been ``covered'' by a nontree edge.
A vertex $u$ on $S$ but not on $S'$ can be stored on
$S_{\mathrm{u}}$
only if it is the bottommost vertex in its stripe;
if so, it is certain to be stored on $S_{\mathrm{u}}$ if
$R[u']=\Tvn{false}$ for at least one vertex $u'$
in the same stripe as~$u$.
Informally, $u$ now represents its stripe and
records the fact that the stripe may contain one
or more ``uncovered'' vertices.

What remains is to describe the manipulation of
$\overline{R}$ and $S_{\mathrm{u}}$,
which must respect the invariants
introduced above.
Initially $\overline{R}[u]=\Tvn{false}$ for all
$u\in V$ and $S_{\mathrm{u}}$ is empty.
When the DFS discovers a vertex $w$ over a tree
edge $\{v,w\}$ and $v$ has a parent $u$,
we set $\overline{R}[u]:=\Tvn{false}$
and push $u$ on $S_{\mathrm{u}}$.
When the DFS withdraws from a vertex $w$ to its
parent $v$ and $v$ has a parent $u$,
we
pop $u$ from the stack
$S_{\mathrm{u}}$ if it is present there---if so,
it is the top entry.
When processing a nontree edge $\{x,y\}$, where
$y$ is the current vertex and
$x$ is an ancestor of $y$,
we pop all vertices
from $S_{\mathrm{u}}$ that are equal to $x$ or closer
than $x$ to the top of $S$.
Informally, some of these vertices---those on $S'$---represent
only themselves, while each of the remaining vertices
represents a whole stripe.
Because we know the position on $S'$ of every
vertex stored on $S'$ and the stripe index
of every vertex,
the process can happen in constant time plus
constant time per vertex popped.
For each vertex $u$ popped, we set
$\overline{R}[u]:=\Tvn{true}$ if $u$ is stored
on $S'$ (if $u$ belongs to a surface segment)
and $\overline{R}[u]:=\Tvn{propagating-true}$ if not.
When a stripe is split or several stripes are joined,
the entries on $S_{\mathrm{u}}$ are changed correspondingly:
The bottommost vertex of each stripe that disappears
is popped from $S_{\mathrm{u}}$
(if present), and the bottommost vertex of each
new stripe is pushed on $S_{\mathrm{u}}$.

When a segment is restored, an invariant demands
that the value
\Tvn{propagating-}
\Tvn{true} be eliminated from
its vertices.
This is done in a simple scan of the segment
from bottom to top:
If the value $\Tvn{propagating-true}$ is ever
encountered, the value of $\overline{R}$
is set to $\Tvn{true}$ for the relevant vertex
and all vertices that follow it, in accordance
with the relation between $\overline{R}$ and
$R$ described above.
Moreover, all vertices $u$ with $R[u]=\Tvn{false}$
after the bottommost vertex
are pushed on $S_{\mathrm{u}}$
(because of the eager restoration,
every such vertex is followed on the gray path
by at least two vertices).
When a segment is dropped from $S'$, all entries
of its vertices on $S_{\mathrm{u}}$ are replaced by
a single entry for its bottommost vertex.

The mapping of the $O({n/{\log n}})$
vertices on $S'$ to their positions on $S'$ can be
maintained in a way described by Kammer
et al.~\cite{KamKL16}:
When the first vertex of a segment is pushed
on $S'$, the DFS is first executed without
this positional information until it has computed
the set $U$ of vertices in the segment.
Then $\Theta(\log n)$ bits are allocated
to each vertex in $U$ using static space
allocation, and finally the relevant part of
the DFS is repeated, at which point the
position of each new vertex on $S'$ can
be recorded.
A similar procedure is followed when
a segment is restored.
It is shown in~\cite{KamKL16} how table
lookup allows the static
space allocation to happen sufficiently
fast, namely in $O({n/{\log n}})$ time---in
essence, it suffices to mark each of
$O({n/{\log n}})$ regularly spaced vertices in $V$ with
the number of smaller vertices belonging to~$U$.
An alternative is to appeal to the fast construction
of rank-select structures of
Baumann and Hagerup~\cite{BauH17}.
In either case, it is easy to see that the total number
of bits needed is $O(n)$.
The fact that the usual stack restoration
has been replaced by eager restoration
does not invalidate the bound established in its proof
on the time needed for splitting and joining stripes.
The other steps described above are no more
expensive, to within a constant factor.
In particular, the number of pops
from $S_{\mathrm{u}}$ is bounded by the number of
pushes on $S_{\mathrm{u}}$.
Therefore the asymptotic time and space bounds
demonstrated for the DFS are valid also for
the entire computation.
\end{proof}

\bibliography{all}

\begin{thebibliography}{10}

\bibitem{AhoHU83}
Alfred~V. Aho, John~E. Hopcroft, and Jeffrey~D. Ullman.
\newblock {\em Data {S}tructures and {A}lgorithms}.
\newblock Addison-Wesley, 1983.

\bibitem{AsaIKKOOSTU14}
Tetsuo Asano, Taisuke Izumi, Masashi Kiyomi, Matsuo Konagaya, Hirotaka Ono,
  Yota Otachi, Pascal Schweitzer, Jun Tarui, and Ryuhei Uehara.
\newblock Depth-first search using {$O(n)$} bits.
\newblock In {\em Proc. 25th International Symposium on Algorithms and
  Computation ({ISAAC} 2014)}, volume 8889 of {\em LNCS}, pages 553--564.
  Springer, 2014.

\bibitem{BanCR16}
Niranka Banerjee, Sankardeep Chakraborty, and Venkatesh Raman.
\newblock Improved space efficient algorithms for {BFS}, {DFS} and
  applications.
\newblock In {\em Proc. 22nd International Conference on Computing and
  Combinatorics ({COCOON} 2016)}, volume 9797 of {\em LNCS}, pages 119--130.
  Springer, 2016.

\bibitem{BauH17}
Tim Baumann and Torben Hagerup.
\newblock Rank-select indices without tears.
\newblock {\em Computing Research Repository ({CoRR})}, arXiv:1709.02377
  [cs.DS]:1--12, 2017.

\bibitem{ChaRS16}
Sankardeep Chakraborty, Venkatesh Raman, and Srinivasa~Rao Satti.
\newblock Biconnectivity, chain decomposition and $\mathit{st}$-numbering using
  {$O(n)$} bits.
\newblock In {\em Proc. 27th International Symposium on Algorithms and
  Computation (ISAAC 2016)}, volume~64 of {\em LIPIcs}, pages 22:1--22:13.
  Schloss Dagstuhl -- Leibniz-Zentrum f\"ur Informatik, 2016.

\bibitem{Cla96}
David Clark.
\newblock {\em Compact Pat Trees}.
\newblock PhD thesis, University of Waterloo, 1996.

\bibitem{CorLRS09}
Thomas~H. Cormen, Charles~E. Leiserson, Ronald~L. Rivest, and Clifford Stein.
\newblock {\em Introduction to Algorithms}.
\newblock The MIT Press, 3rd edition, 2009.

\bibitem{DodPT10}
Yevgeniy Dodis, Mihai P{\v a}tra{\c s}cu, and Mikkel Thorup.
\newblock Changing base without losing space.
\newblock In {\em Proc. 42nd {ACM} Symposium on Theory of Computing ({STOC}
  2010)}, pages 593--602. {ACM}, 2010.

\bibitem{ElmHK15}
Amr Elmasry, Torben Hagerup, and Frank Kammer.
\newblock Space-efficient basic graph algorithms.
\newblock In {\em Proc. 32nd International Symposium on Theoretical Aspects of
  Computer Science ({STACS} 2015)}, volume~30 of {\em LIPIcs}, pages 288--301.
  Schloss Dagstuhl -- Leibniz-Zentrum f\"ur Informatik, 2015.

\bibitem{Gab00}
Harold~N. Gabow.
\newblock Path-based depth-first search for strong and biconnected components.
\newblock {\em Inf. Process. Lett.}, 74(3-4):107--114, 2000.

\bibitem{Gol07}
Alexander Golynski.
\newblock Optimal lower bounds for rank and select indexes.
\newblock {\em Theor. Comput. Sci.}, 387(3):348--359, 2007.

\bibitem{Hag17}
Torben Hagerup.
\newblock An optimal choice dictionary.
\newblock {\em Computing Research Repository ({CoRR})}, arXiv: 1711.00808
  [cs.DS]:1--9, 2017.

\bibitem{HagK16}
Torben Hagerup and Frank Kammer.
\newblock Succinct choice dictionaries.
\newblock {\em Computing Research Repository ({CoRR})}, arXiv:1604.06058
  [cs.DS]:1--50, 2016.

\bibitem{HagKL17}
Torben Hagerup, Frank Kammer, and Moritz Laudahn.
\newblock Space-efficient {E}uler partition and bipartite edge coloring.
\newblock In {\em Proc.\ 10th International Conference on Algorithms and
  Complexity (CIAC 2017)}, volume 10236 of {\em LNCS}, pages 322--333.
  Springer, 2017.
\newblock \emph{Theor.\ Comput.\ Sci.}, to appear.

\bibitem{KamKL16}
Frank Kammer, Dieter Kratsch, and Moritz Laudahn.
\newblock Space-efficient biconnected components and recognition of outerplanar
  graphs.
\newblock In {\em Proc. 41st International Symposium on Mathematical
  Foundations of Computer Science ({MFCS} 2016)}, volume~58 of {\em LIPIcs},
  pages 56:1--56:14. Schloss Dagstuhl -- Leibniz-Zentrum f\"ur Informatik,
  2016.

\bibitem{Knu97}
Donald~E. Knuth.
\newblock {\em The Art of Computer Programming, Volume 1: Fundamental
  Algorithms}.
\newblock Addison-Wesley, Reading, MA, 3rd edition, 1997.

\bibitem{RamRS07}
Rajeev Raman, Venkatesh Raman, and Srinivasa~Rao Satti.
\newblock Succinct indexable dictionaries with applications to encoding {\it
  k}-ary trees, prefix sums and multisets.
\newblock {\em {ACM} Trans. Algorithms}, 3(4), 2007.

\bibitem{Sch13}
Jens~M. Schmidt.
\newblock A simple test on 2-vertex- and 2-edge-connectivity.
\newblock {\em Inf. Process. Lett.}, 113(7):241--244, 2013.

\bibitem{Tar72}
Robert~Endre Tarjan.
\newblock Depth-first search and linear graph algorithms.
\newblock {\em {SIAM} J. Comput.}, 1(2):146--160, 1972.

\bibitem{Vig08}
Sebastiano Vigna.
\newblock Broadword implementation of rank/select queries.
\newblock In {\em Proc. 7th International Workshop on Experimental Algorithms
  ({WEA} 2008)}, volume 5038 of {\em LNCS}, pages 154--168. Springer, 2008.

\end{thebibliography}

\end{document}